\newtheorem{theorem}{Theorem}[section]
\newtheorem{lemma}{Lemma}[section]
\newtheorem{assumption}{Assumption}[section]
\newcommand{\blind}{0}
\begin{document}

\def\spacingset#1{\renewcommand{\baselinestretch}%
	{#1}\small\normalsize} \spacingset{1}


\if0\blind
{
	\title{\bf  Assessment of Regression Models with Discrete Outcomes Using  Quasi-Empirical Residual Distribution Functions}
	\author{Lu Yang \\
		School of Statistics\\ University of Minnesota}
	\date{}
	\maketitle
} \fi

\if1\blind
{
	\bigskip
	\bigskip
	\bigskip
	\begin{center}
		{\LARGE\bf Assessment of Regression Models with Discrete Outcomes Using  Quasi-Empirical Residual Distribution Functions}
	\end{center}
	\medskip
} \fi

\bigskip

\begin{abstract}
Making informed decisions about model adequacy has  been an outstanding issue for regression models with discrete outcomes. Standard assessment tools  for such outcomes (e.g. deviance residuals) often show a large discrepancy from the hypothesized pattern 
even under the true model and are  not informative  especially when data are highly discrete (e.g. binary). To fill this gap,  we propose a  quasi-empirical residual distribution function for general discrete (e.g. ordinal and count)  outcomes  that serves as an  alternative to the empirical Cox-Snell residual distribution function.
The assessment tool we propose is a principled approach and does not 
require injecting  noise into the data.  When at least one continuous covariate is available, we show  asymptotically that the proposed function  converges uniformly to the identity function under the correctly specified model, even with highly discrete  outcomes.
Through simulation studies, we demonstrate  empirically that the proposed quasi-empirical residual distribution function outperforms   commonly used residuals  for various model assessment tasks, since it  is close to the hypothesized pattern under the true model and significantly departs from this pattern under model misspecification, and is thus an effective assessment tool. 
\end{abstract}

	\noindent%
{\it Keywords:} Cox-Snell residuals; Goodness-of-fit; Generalized linear models; Insurance claim frequency; $m$-asymptotics.
\vfill
\newpage
\spacingset{1.5}

	\section{Introduction}\label{sec:intro}
	
	Regression models   summarize researchers' and analysts' knowledge about relationships between covariates and an outcome of interest.  For example, in auto insurance applications, when modeling the number of claims from each policyholder, Poisson distributions are commonly adopted, and a set of typical covariates including drivers' age and car model is usually used by actuaries. 
	However, researchers' prior knowledge may not fully capture the patterns in the data,
	perhaps a Poisson distribution is inappropriate or other important covariates are missed.
	Given the potential pernicious consequences of model misspecification,
	in many fields, analysts are routinely tasked with demonstrating that their models  sufficiently characterize all pertinent features of the data. 
	Hence, it is of prime importance to check the adequacy of a  model's fit and  if necessary, further refine the model.

%
	
		Towards the goal of model assessment and refinement, there are two main streams
	of literature on evaluating 
		the adequacy of a model for a  dataset at hand.
	The first one is on formal goodness-of-fit tests, which produce a single number (a test statistic or the corresponding $p$-value) and yield
	conclusions with statistical confidence. The second type of approach is  an informal and typically graphical assessment,  which is the key focus of this paper.
	Residuals play a central role in 
	examining the
	agreement between a dataset and an assumed model  both formally and informally. 
One can use graphical techniques  (\citealt{ben2004quantile}) or construct overall goodness-of-fit tests using residuals (\citealt{randles1984tests}) to assess the adequacy of a model. 
Should a model be perceived as insufficient,  model diagnostics, which aim at identifying the exact causes of misspecification, may be a further task.  
The distribution of residuals often  hints at  the potential causes of misspecification.
	There are also  model checking  methods for particular model assumptions which do not rely on residuals, e.g. \cite{cook1997graphics}. 

This article centers around developing a   tool for assessing model adequacy with discrete outcomes.
	An informative assessment tool should have the following two desirable properties. First, it is crucial 
that the tool follows a known shape or pattern under the true model and deviates from this shape with model misspecification. 
This pattern of behavior is the foundation of assessment. 
Percentile-percentile (P-P) plots are commonly employed to compare the empirical distribution of residuals with the hypothesized distribution.
Second, assessment tools should be sensitive to various types of model misspecification so as to provide effective detection of model inadequacy.

Residuals, originally rooted in linear models, have been used in regression model assessment and diagnostics extensively. 
\cite{cox1968general} generalized the idea of residuals beyond normality by creating independent and identically distributed (i.i.d.) variables  homogeneous across covariates. Their framework is compelling  for continuous outcomes. For example, for continuous observations $Y_i,i=1,\ldots,n$, the uniform Cox-Snell residuals are defined as $\hat{F}_i(Y_i)$, where $\hat{F}_i$ is the fitted distribution function of $Y_i$.
Given a well-fitting model, the Cox-Snell residuals  should present a uniform trend, and otherwise lack of fit is implied.
	However,  the effectiveness of Cox-Snell residuals does not carry over to discrete outcomes, which  in general cannot be expressed as transforms of i.i.d. variables. 	Yet regression models for discrete outcomes have been long and widely applied in many areas of research, including insurance, biology, and education, among many others. 
	The focus of model assessment with discrete outcomes has been to create approximately identically distributed variables by
searching for the optimal transformations (e.g., \citealt{pierce1986residuals}).

		In practice, there are two types of established residuals commonly adopted for discrete data (\citealt{mccullagh1989generalized}). The first type is Pearson residuals which are defined according to  
		the contribution of each observation to the Pearson goodness-of-fit statistic.
The second type is  deviance residuals which depend on the contribution of each observation
to the likelihood. 
There exist  other well-known residuals, for instance Anscombe residuals  (\citealt{anscombe1961examination}), though it has been shown that Anscombe residuals and deviance residuals behave similarly (\citealt{mccullagh1989generalized}). More recently, \cite{dunn1996randomized} proposed   randomized quantile residuals  based on the idea of continuization.
Let $a_i = \lim_{y\uparrow Y_i}\hat{F}_i(y)$ and $b_i = \hat{F}_i(Y_i)$, then the randomized quantile
residual for $Y_i$ is defined as
$ r_{R}(Y_i) = \Phi^{-1} (U_i),$
where $U_i$ is a simulated uniform random variable on the interval $(a_i,b_i]$, independent of $Y_i$, and $\Phi$ is the standard normal distribution function.
 There are also residuals built for specific types of discrete data, e.g. binary (\citealt{landwehr1984graphical}) and ordinal data  (\citealt{li2012new,liu2018residuals}).

Are these established residuals informative assessment tools for discrete outcomes, in terms of the two desirable properties listed above?
Unfortunately, 
Pearson and deviance residuals of discrete outcomes often deviate dramatically from the null shape (a normal distribution) even under the true model, contrary to the first desirable property.
It has been noted that  the level of discreteness plays a pivotal  role in the behavior of residuals, so called $m$-asymptotics, in addition to the typical $n$-asymptotics (\citealt{pierce1986residuals}). Here $m$ could be the number of trials in binomial distributions, or  Poisson means, which controls the discreteness level.  With a small $m$, the outcomes concentrate on a few  values.
When $m$ is small, deviance  and Pearson residuals could have a large discrepancy with the null pattern  even under the true  model, and a large sample size $n$ does not relieve this concern. As a result, model assessments are not illuminating in this case. 
Randomized quantile residuals,  on the other hand,
 might not be sensitive to model misspecification, as a result of 
  random noise  introduced
 (see, e.g. Figure~\ref{fig:dispsmall500}).
For the same reason, moreover, this method may not be coherent with different realizations of the random noise (Figure~\ref{fig:random}).



	In this paper, we revisit Cox-Snell residuals under discreteness.
	We  construct  a quasi-empirical residual distribution function which serves as a substitute for  the empirical distribution of   Cox-Snell residuals. 
	Instead of attempting to construct  residuals themselves as is done in most existing literature, we build a surrogate for the empirical residual distribution function with discrete outcomes   based on the idea of local averaging. The proposed  function follows the presumed pattern of  empirical residual distribution functions (an identity function).
	We show  asymptotically that the   quasi-empirical residual distribution function converges to the identity function uniformly under the correctly specified model, when at least one continuous covariate is available.
	The proof of such is nontrivial due to the unique form of the proposed function and the discreteness of the outcomes.
	Under many types of misspecification including missing covariates and overdispersion, we demonstrate empirically that the proposed  quasi-empirical residual distribution function deviates significantly from the identity function.
	Hence, it can be used as a valid alternative to the empirical residual distribution function in  P-P  plots.  

The proposed assessment tool resides in the middle ground between formal  goodness-of-fit tests and  specific diagnostic   methods.   Unlike  goodness-of-fit tests which provide a single number, what we propose is a graphical tool, and its shape carries information about the severity of misspecification.
On the other hand, compared with
 the established graphical assessment tools (i.e. residual plots), 
  our methodology possesses the two desirable properties of an informative assessment tool and thus can provide reliable judgment on model adequacy. 
  Our  tool
  conducts an overall check of model adequacy. 
	If insufficiency of fit is detected and one wants to investigate the source of misspecification,  designated diagnostic tools in the literature can be subsequently applied, e.g., identifying outliers (\citealt{pregibon1981logistic}).
	
	We highlight the contributions of the proposed assessment tool as follows. First,
	the  quasi-empirical residual distribution function  converges to the identity function uniformly under a correctly specified model, even with a small $m$. This provides   theoretical justifications
	for model assessments based on the proposed tool.
	In contrast,  standard residuals including deviance residuals are normally distributed  with an error term of order at least $O_p(m^{-1/2})$ which cannot be fixed by large sample sizes. 
	Second, the  assessment tool we propose is a principled approach and 
	does not require injecting noise into the data, such as is done for randomized quantile residuals. To the best of our knowledge, this is the only  assessment approach which is not simulation-based and still guarantees the asymptotic convergence to the null shape for discrete outcomes.
	Third, we demonstrate empirically that the proposed tool outperforms other assessment tools in terms of the two key properties   described above in various settings when at least one continuous covariate is available.
	Lastly,  our tool works for general discrete outcomes including ordinary (binary) and count data.
	
	The rest of the paper is organized as follows. In Section \ref{sec:method}, we present the quasi-empirical residual distribution function and its asymptotic properties. In Section \ref{sec:simulation}, we demonstrate the usage and properties of the proposed tool in simulated examples, and  Section \ref{sec:data} contains an application of the proposed tool on an insurance dataset. Discussion and conclusions are presented in Section \ref{sec:conc}. The Appendix includes additional theoretical and simulation results. The  proofs of the theoretical results are included in the supplementary material.

	\section{Methodology}\label{sec:method}
	\subsection{Why Not Directly Apply Cox-Snell Residuals?}
	Let $Y$ be the outcome of interest.
	Denote the distribution function of $Y$ conditional on  covariates
	$\mathbf{X}=\mathbf{x}$ as $F(y\mid\mathbf{x})=P(Y\leq y\mid\mathbf{X}= \mathbf{x})$,  where $F$ belongs to a parametric family indexed by parameters $\bm\beta$. Here $\bm \beta$ can potentially relate to location, scale, and shape parameters.
	
	 Plugging $\mathbf{X}$ and $Y$ in $F$, the variable $F(Y\mid\mathbf{X})$  is known as the probability integral transform.  If $Y$  is continuous,
	for any fixed value $s \in (0, 1)$,
	\begin{align}\label{unic}
		P\left(F(Y\mid\mathbf{X}=\mathbf{x})\leq s\right)=s,
	\end{align}
	and 
	taking expectation with respect to $\mathbf{X}$ yields 
	\begin{align}\label{uni}
		P\left(F(Y\mid\mathbf{X})\leq s\right)=s,
	\end{align}
	i.e.
	$F\left(Y\mid \mathbf{X}\right)$ is uniformly distributed. 
	Given an i.i.d. sample  $(\mathbf{X}_{i}^\top,Y_{i}), i = 1, \ldots , n$, with a fitted distribution function $\hat{F}$ associated with fitted parameters $\hat{ {\bm\beta}}$, 
	one can obtain a sequence of Cox-Snell residuals $\hat{F}\left(Y_{i}\mid \mathbf{X}_{i}\right),  i = 1, \ldots , n $ and 
	the corresponding empirical residual  distribution function
	\begin{align}\label{eq:cox}
		\hat{U}_C(s;\hat{\bm\beta})=\frac{1}{n} \sum_{i=1}^{n}1(\hat{F}\left(Y_{i}\mid \mathbf{X}_{i}\right) \leq s).
	\end{align}
	Under a correctly specified model, $\hat{U}_C(\cdot;\hat{\bm\beta})$ should be  approximately an identity  function, and  an otherwise large discrepancy indicates misspecification. 
	
	Owing to this property, it is common practice to use P-P plots to visualize the  comparison between the  empirical distribution of  Cox-Snell residuals and their null distribution function  under the true model, i.e., the identity function. Figure~\ref{fig:pp} portrays the P-P plots of the Cox-Snell residuals in simulated examples.
	In the left panel, the data are generated with a gamma regression model, and the Cox-Snell residuals are calculated using the correctly specified model. As anticipated, the Cox-Snell residuals appear to be uniform.

	\begin{figure}[!h]\centering
		\includegraphics[width=.55\textwidth]{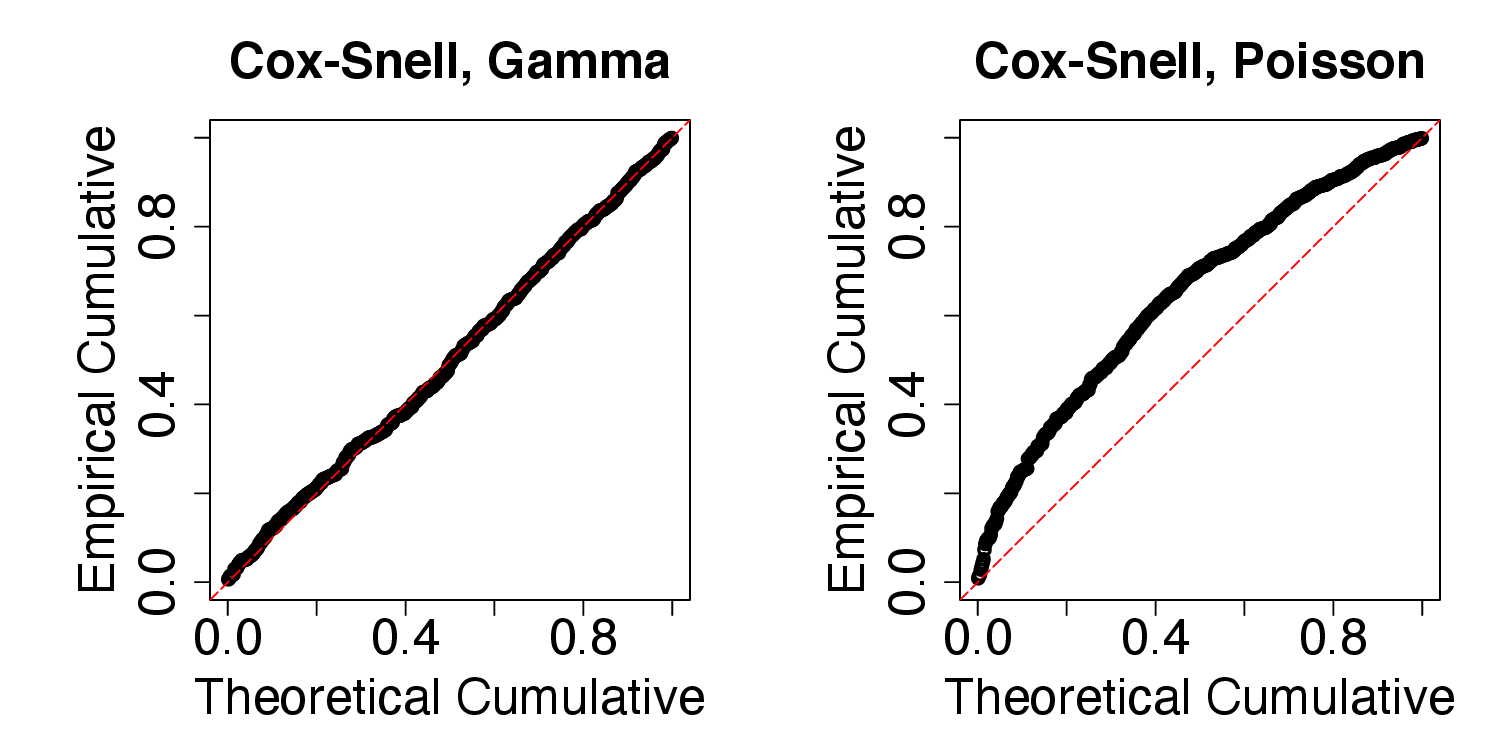}
		\caption{P-P plots of Cox-Snell residuals under correctly specified models. Left panel: data are generated and fit with gamma regression. Right panel: data are generated and fit with Poisson regression.\label{fig:pp}}
	\end{figure}
	
	However, when $Y$ is a discrete variable,
	the Cox-Snell residuals may not be uniformly distributed even under the true model. 
	As in the right panel of Figure~\ref{fig:pp},  when the data are generated from a Poisson generalized linear model (GLM), the Cox-Snell residuals are far apart from uniformity even with the knowledge of the underlying model.
	The uniformity of probability integral transforms does not hold under discreteness due to the fact that 
	\eqref{unic} is not  true for some values of $s\in(0,1)$, in contrast to the continuous cases. 
	The lemma below gives the condition under which \eqref{unic} holds for discrete outcomes.

	\begin{lemma}\label{iff}
		Without loss of generality, suppose Y is a discrete variable taking integer values. Conditioning on $\mathbf{X}=\mathbf{x}$, \eqref{unic} holds for discrete $Y$ if and only if $s=F\left(k\mid \mathbf{x}\right)$ for some integer $k$.
	\end{lemma}
\begin{proof}
	``If." Assume $s=F\left(k\mid \mathbf{x}\right)$, then $$P(F(Y\mid \mathbf{x})\leq s)=P\left(F(Y\mid \mathbf{x})\leq F\left(k\mid \mathbf{x}\right)\right)=P(Y< k+1\mid \mathbf{x})=P(Y\leq k\mid \mathbf{x})=s.$$
	
	``Only if." 
	Now suppose $P(F(Y\mid \mathbf{x})\leq s)=s$. If $s\neq F\left(k\mid \mathbf{x}\right)$ for any integer $k$,  there exists a $k_0$ such that $F(k_0\mid \mathbf{x})<s<F(k_0+1\mid \mathbf{x})$, 
	as demonstrated in Figure~\ref{fig:step}. Then $$P(F(Y\mid \mathbf{x})\leq s)=P(Y<k_0+1\mid \mathbf{x})=P(Y\leq k_0\mid \mathbf{x})=F(k_0\mid \mathbf{x})<s,$$ which is contradictory. Therefore, it holds that $s=F\left(k\mid \mathbf{x}\right)$ for some integer $k$.
	
	\begin{figure}[!h]\centering
		\includegraphics[width=.7\textwidth]{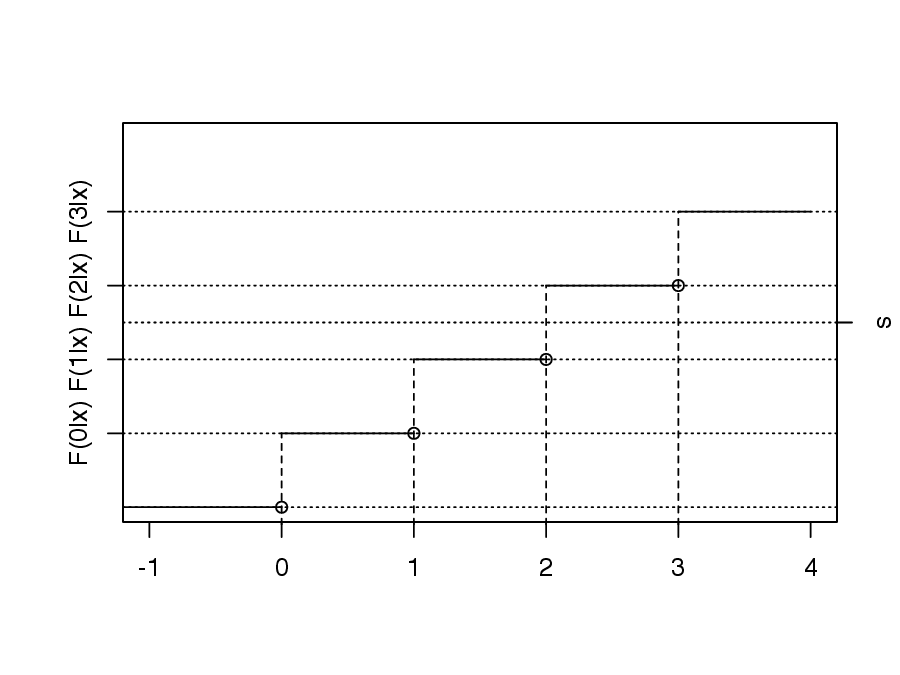}
		\vspace{-0.4in}
		\caption{Demonstrative plot for  the distribution function of a discrete random variable.\label{fig:step}}
	\end{figure}
\end{proof}

	\subsection{Construction of Quasi-Empirical Residual Distribution Functions}\label{sec:con}
	Now we construct an alternative to $\hat{U}_C(\cdot;\bm{\beta})$ under discreteness.
	Intuitively, for each point $s$, if one could find a subset of observations for which \eqref{uni} holds,  this subset can be plugged in \eqref{eq:cox} to obtain the identity function. 
	Without loss of generality,  assume hereinafter the support of $Y$ is nonnegative integers. Motivated by Lemma \ref{iff},   we  define the conditional range of the  distribution function given $\mathbf{X}=\mathbf{x}$ as a grid $\Lambda(\mathbf{x})=\{F\left(k\mid \mathbf{x}\right):k=0,1,\ldots\}$.   Note that the  range of $Y$ can be finite, for instance,  $\Lambda(\mathbf{x})=\{F\left(0\mid\mathbf{x}\right),1\}$ for binary variables.
	From Lemma \ref{iff},  \eqref{unic} is true if and only if
	$s\in \Lambda(\mathbf{x})$. 
	If $\mathbf{X}$ contains continuous components,  when $\mathbf{X}$  varies in regression,
	there might be a subset of observations for which
	$s\in \Lambda(\mathbf{X})$ holds approximately.  Equivalently, the distance between $s$ and $\Lambda(\mathbf{X})$, $d(s,\Lambda(\mathbf{X}))=\min\{\mid s-\eta\mid,\eta\in\Lambda(\mathbf{X})\}$, is small in this case.
	Hence, we need to carefully characterize  $d\left(s,\Lambda(\mathbf{X})\right)$.

	Conditioning on $\mathbf{X}=\mathbf{x}$, denote $F ^{(-1)}\left(\cdot\mid \mathbf{x}\right)$ as the general inverse function of $F\left(\cdot\mid \mathbf{x}\right)$ such that $F ^{(-1)}(s\mid  \mathbf{x})= \inf\{y :s\leq F\left(y\mid \mathbf{x}\right) <  1\} $ for $s \in (0, 1)$. 
	Here we exclude $\{1\}$ to avoid boundary effects. Removing this point is not a concern, because \eqref{unic} always holds on the boundary.
	Denote $$H^+( s;\mathbf{x}) = F (F ^{(-1)}(s\mid  \mathbf{x})\mid \mathbf{x}).$$ 
	It can be seen that $H^+( s;\mathbf{x})=\min\{\eta\in \Lambda(\mathbf{x})\backslash\{1\}: \eta\geq s\}$, i.e., $H^{+}( s;\mathbf{x})$ is the smallest interior point on the grid $\Lambda(\mathbf{x})$ that is larger than or equal to $s$.
	In the same way, one can define the largest interior point on the grid $\Lambda(\mathbf{x})$ that is smaller than or equal to $s$ as
	$$H^{-}( s;\mathbf{x})=\max\{\eta\in \Lambda(\mathbf{x})\backslash\{1\}: \eta\leq s\}.$$
	To combine these two cases, define the interior grid point closest to $s$
	\begin{align*} 
		H( s;\mathbf{x})=\begin{cases}
			H^{+}( s;\mathbf{x})&H^{+}( s;\mathbf{x})+H^{-}( s;\mathbf{x})\leq2 s\text{ or } s<F(0\mid \mathbf{x}),\\
			H^{-}( s;\mathbf{x})&H^{+}( s;\mathbf{x})+H^{-}( s;\mathbf{x})>2 s\text{ or } s>\max\left(\Lambda(\mathbf{x})\backslash\{1\}\right).
		\end{cases}
	\end{align*}
	One can view $H(s;\mathbf{x})$  as the proximal interpolator   which maps $s$ to its nearest neighbor on $\Lambda(\mathbf{x})$.
	It follows that $d(s,\Lambda(\mathbf{x})\backslash\{1\})=\mid H( s;\mathbf{x})-s\mid $.
	
	When $s$ is ``close to" being on the grid given $\mathbf{x}$ in the sense that $H(s; \mathbf{x}) \approx s$, 
	we  have an approximation to \eqref{unic}
	$$
	P\left(F(Y\mid \mathbf{x}) \le H(s; \mathbf{x})\right) =H(s; \mathbf{x}) \approx s,
	$$
	where the first equation  holds due to the fact that $H( s;\mathbf{x})\in \Lambda(\mathbf{x})$ by its definition and Lemma \ref{iff}.   Therefore, we can focus on the empirical residual distribution function among the subset of observations for which $H(s; \mathbf{X}) \approx s$.
	
	Now consider a sample $(\mathbf{X}_i^\top,Y_i),i=1,\ldots,n$ and a fixed value of $s$.  
	To realize the above idea, we use a kernel function $K(\cdot)$ to select the subset of observations whose grid is close to $s$.
	We assign weights to observations depending on the   normalized distance between $s$ and $H(s; \mathbf{X}_i)$, i.e., $K\left[(H(s; \mathbf{X}_i)-s)/\epsilon_n\right]$, where  $\epsilon_n$ is a small bandwidth. 
	Then, we focus on the empirical distribution function of the Cox-Snell residuals using the selected subset of data  and define the quasi-empirical residual distribution function  
	\begin{align}
		\begin{split}\label{esti}
			\hat{U}(s;\bm{\beta})=\sum_{i=1}^{n}W_{n}(s;\mathbf{X}_i,\bm{\beta})1\left[ F(Y_{i}\mid \mathbf{X}_i)\leq H( s;\mathbf{X}_{i})\right],
		\end{split}
	\end{align}
	where 
	$$W_{n}(s;\mathbf{X}_i,\bm{\beta})=\frac{K\left[(H( s;\mathbf{X}_{i})-s)/\epsilon_n\right]}
	{\sum_{j=1}^{n}K\left[(H( s;\mathbf{X}_{j})-s)/\epsilon_n\right]},$$
	and $K$ is a bounded, symmetric, and Lipschitz continuous kernel.

	 \sloppy
	Essentially, $\hat{U}(s;{\bm\beta})$ is an estimator of the probability
	${P}\left(Y\leq k\mid F\left(k\mid \mathbf{X}\right)=s\right)= P(F\left(Y\mid \mathbf{X}\right)\leq s \mid F\left(k\mid \mathbf{X}\right)=s)$,
	if there exists such an integer $k$ satisfying the condition $F\left(k\mid \mathbf{X}\right)=s$. 
	This probability equals  $s$ under the true model,  trivially from Lemma \ref{iff}. 
	However, it is possible that, for a given observation, there does not exist  such an integer. To overcome this problem,  we   use the subset of the data for which $F\left(k\mid \mathbf{X}\right)\approx s$ to estimate  ${P}(Y\leq k\mid F\left(k\mid \mathbf{X}\right)\approx s)$ instead. Given continuous covariates, it is possible to find such a subset,  and  kernels are employed to facilitate the selection.
	The proposed $\hat{U}(\cdot;\bm{\beta})$ should be close to the identity function under the true model, and  discrepancies from identity indicate lack of fit. 
	Hence, $\hat{U}(\cdot;\bm{\beta})$ can be used as an assessment tool in place of the empirical Cox-Snell residual distribution function  for discrete outcomes.
	For this reason, $\hat{U}(\cdot;\bm{\beta})$ is named as ``quasi"-empirical residual distribution function, implying it is an alternative to, though not the real,  empirical residual distribution function. 
	The proposed approach  does not inject noise to the data. 
	In contrast, the strategy in  \cite{dunn1996randomized} is to  simulate a uniform random variable to fill in the gap between $F\left(k\mid \mathbf{X}\right)$ and $F(k-1\mid\mathbf{X})$.

	To demonstrate the effectiveness of the proposed tool  immediately, Figure~\ref{fig:ppcomp}  shows the curve of $\hat{U}(\cdot;\bm{\beta})$ for the foregoing Poisson example  (right panel of Figure \ref{fig:pp}) wherein the model is correctly specified.
	A more thorough simulation study is included in Section \ref{sec:simulation}. 
	In practice, $\bm \beta$ is unknown;  let $\hat{\bm\beta} $ be the  corresponding estimator. 
	By plugging $\hat{\bm\beta} $ in \eqref{esti}, we may obtain  $\hat{U}(\cdot;\hat{\bm\beta})$.
	We will show in Section \ref{asym} that the uncertainty in the coefficients is negligible asymptotically.
	
\begin{figure}[!h]
	\centering
	\includegraphics[width=.3\textwidth]{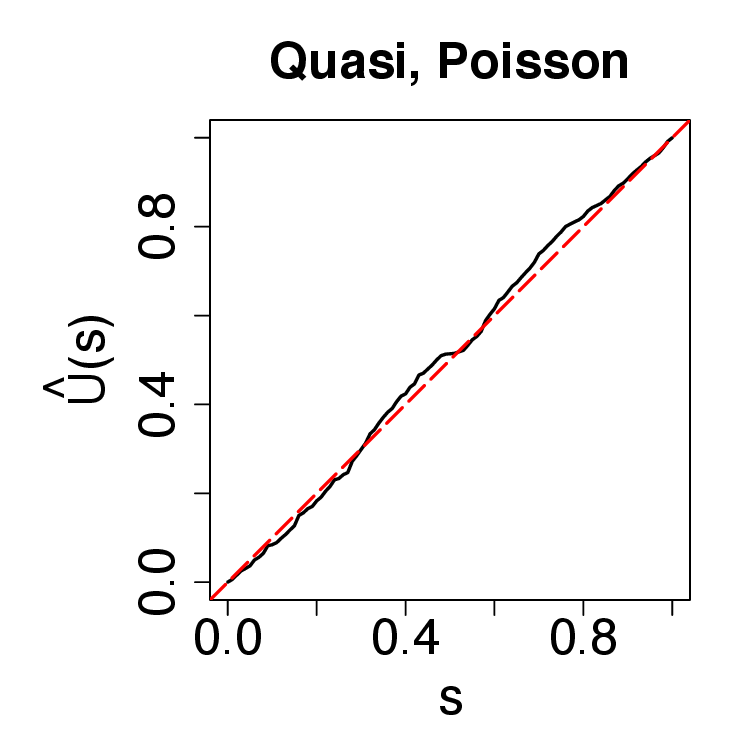} 
	
	\caption{ Curve of the proposed quasi-empirical residual distribution function for the Poisson example in the right panel of Figure \ref{fig:pp} under the correctly specified model.\label{fig:ppcomp}}
\end{figure}

	\subsection{Examples}\label{sec:exam}
	As an example, 
	when $Y$ is a binary outcome, 
	its  distribution grid only contains two points, i.e., $\Lambda(\mathbf{x})=\left\lbrace F(0\mid \mathbf{x}),1\right\rbrace$ and thus $H(s;\mathbf{x})=F(0\mid \mathbf{x})$.  Then, \eqref{esti} becomes $$\frac{\sum_{i=1}^{n}K\left[\left(F(0\mid \mathbf{X}_i)-s\right)/\epsilon_n\right]1(Y_{i}=0)}{\sum_{j=1}^{n}K\left[\left(F(0\mid \mathbf{X}_j)-s\right)/\epsilon_n\right]}.$$
	In the binary case, the proposed function takes the form of the Nadaraya-Watson estimator.
	In contrast, when $Y$ is continuous, which can be viewed as the limiting case when $Y$ gets less discrete, it is always true that $H(s;\mathbf{X}_i)=s$. It follows that $W_{n}(s;\mathbf{X}_i,\bm{\beta})=1/n$, and the quasi-empirical residual distribution function \eqref{esti} degenerates to the empirical residual distribution function \eqref{eq:cox}. In both extreme cases, the theoretical properties of the resulting quasi-empirical residual distribution function has been extensively studied  (e.g. \citealt{li2007nonparametric}).

	Technical difficulties and major departures from existing methods are pronounced when  $Y$ is discrete with an infinite range, for instance a Poisson variable, under which $\hat{U}(\cdot;\bm{\beta})$  is  nonstandard  in the following aspects. First,  for a fixed point $s$,  $H(s;\mathbf{X})$
	is a noncontinuous transform of  the random location parameter  $\mu=\mathbf{X}^\top\bm{\beta}$.
	We include an example for illustration  assuming that $Y$ follows  the commonly used   Poisson GLM  with the log link,
	i.e., $Y\mid \mathbf{X}\sim\mathrm{Poisson}\left(\mathrm{exp}(\mathbf{X}^\top\bm{\beta})\right)$.
	Figure~\ref{fkxplotapp} shows $H( s;\mathbf{X})$ (solid curves) for  fixed $s$   as a function of   $\mu$. In this example, for a fixed $k$, $F\left(k\mid \mathbf{X}\right)$ (dashed lines) is a monotone decreasing function of $\mu$. The curve of $H( s;\mathbf{X})$ as a function of $\mu$ 
	is comprised of continuous pieces from the curves of $F\left(k\mid \mathbf{X}\right), k=0,1,\ldots$, and the transitions occur when the
	nearest neighbor of $s$ changes from $F\left(k\mid \mathbf{X}\right)$ to $F(k+1\mid \mathbf{X})$ for integer $k$, as $\mu$ increases. 
	The random variable $H( s;\mathbf{X})$  is a continuous function of $\mu$ almost everywhere except at a countable number of points  under which there are two nearest neighbors of $s$ on $\Lambda(\mathbf{X})$.
	We will further address the issue of discontinuity in Section \ref{asym}, which complicates the proof of asymptotic properties.

	\begin{figure}[!h]\centering\centering
		\includegraphics[width=.75\textwidth]{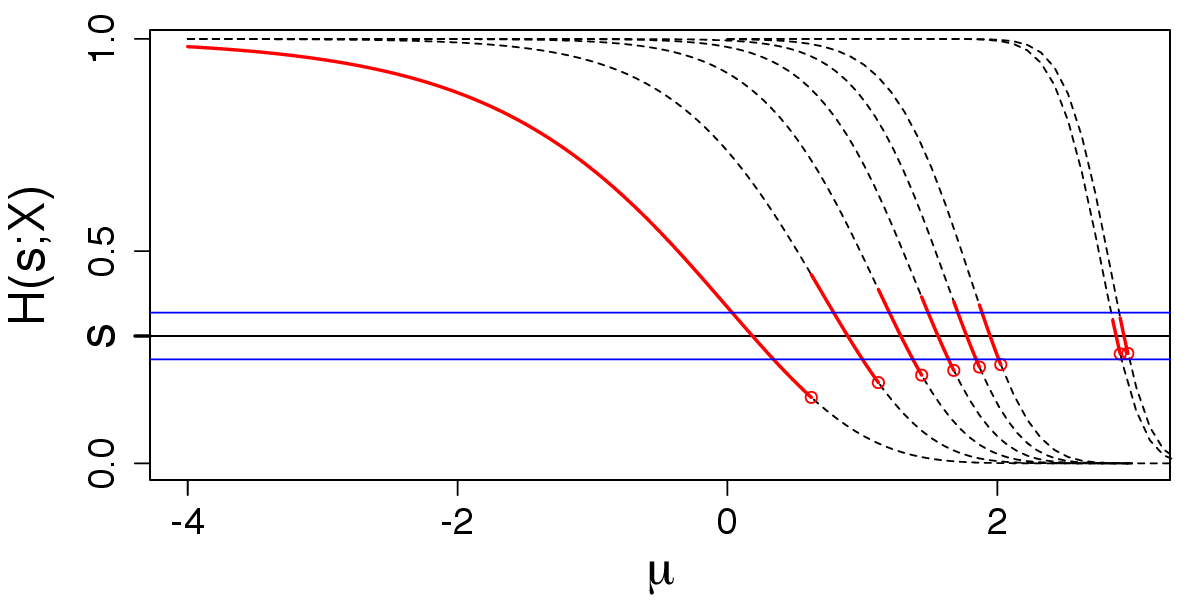}
		\caption{$H( s;\mathbf{X})$  (solid  curve) for fixed $s$ as a function of $\mu=\mathbf{X}^\top\bm{\beta}$ in a Poisson GLM with a log link. Dashed  curves: $F\left(k\mid \mathbf{X}\right)$ , from left to right $k=0,1,2,3,4,5,15,16$. The curve of $H( s;\mathbf{X})$  
			is comprised of pieces from the curves of $F\left(k\mid \mathbf{X}\right), k=0,1,\ldots$.   Horizontal lines: $s+\epsilon,s,s-\epsilon$.\label{fkxplotapp}}
	\end{figure}

	The second complicating factor is that $H( s;\mathbf{X})$, the proximal interpolator of $s$, 
	certainly changes with $s$.
	It implies that, when focusing on  different points, we plug different variables into the kernel function, which distinguishes the proposed function from traditional nonparametric regression estimators. 
	Assuming continuity of $\mu$, $F\left(k\mid \mathbf{X}\right)$  is random with a density denoted as  $f_{F\left(k\mid \mathbf{X}\right)}$, and $f_{H( s;\mathbf{X})}$ is the density of $H( s;\mathbf{X})$.  The weights in  \eqref{esti}, $W_{n}(s;\mathbf{X}_i,\bm{\beta})$,  relate to the density of   $f_{H( s;\mathbf{X})}$  at $s$, i.e.,  $f_{H( s;\mathbf{X})}(s)$. 
	By transformation of random variables,
	\begin{gather}\label{equ:hdensity}
		f_{H( s;\mathbf{X})}(s)=\sum_{k=0}^{\infty}f_{F\left(k\mid \mathbf{X}\right)}(s)\triangleq g(s).
	\end{gather}
	Note that $f_{H(s;\mathbf{X})}(t)$ is a density function with respect to $t$, 
	while  $g(s)$ is not a density function with respect to $s$.
	According to \eqref{equ:hdensity}, unlike typical nonparametric regression methods for which one assigns weights to observations depending on one regressor, here all the $F\left(k\mid \mathbf{X}\right),k=0,1,\ldots$ contribute to the weights, and this dynamic scheme increases efficiency. 
	Meanwhile, in Section \ref{asym}, by making realistic assumptions, we make sure $g(s)$ is bounded. 
For ordinal outcomes with a finite support, 
the summation in \eqref{equ:hdensity} is up to the second largest possible value. For example,
in Figure~\ref{fkxplotfinite}, $f_{H( s;\mathbf{X})}(s)=f_{F(0\mid \mathbf{X})}(s)$ in the left panel and $f_{H( s;\mathbf{X})}(s)=\sum_{k=0}^{2}f_{F\left(k\mid \mathbf{X}\right)}(s)$ in the right panel.
	\begin{figure}[!h]\centering
	\begin{center}
		\includegraphics[width=.4\textwidth]{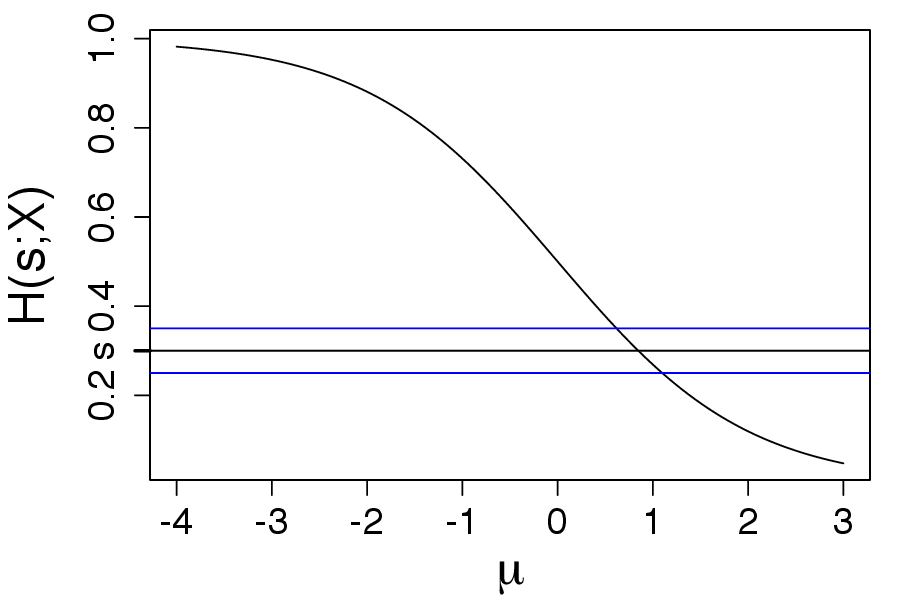}
		\includegraphics[width=.4\textwidth]{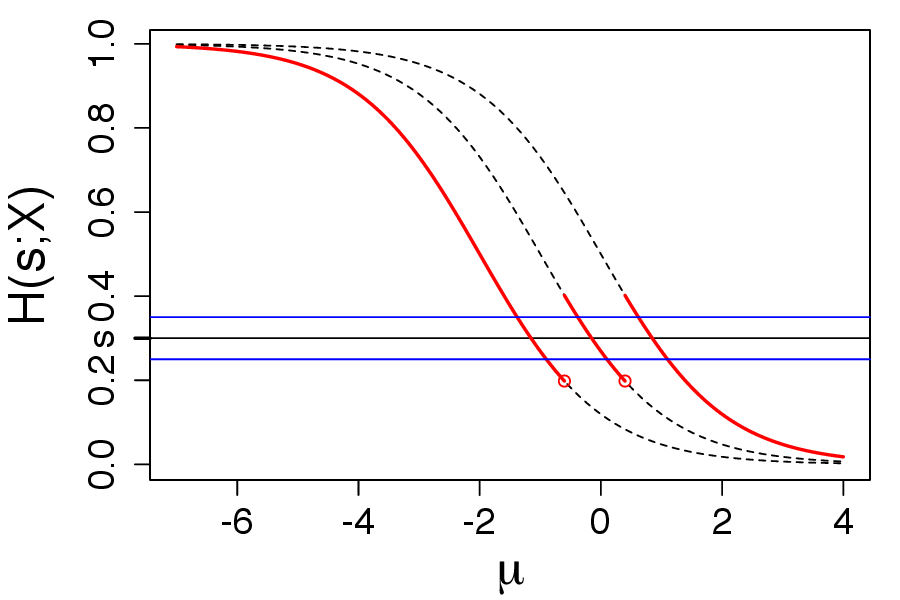}
		\caption{$H( s;\mathbf{X})$ (solid curve) as a function of $\mu=\mathbf{X}^\top\bm{\beta}$ for logistic regression (left panel) and ordinal regression with levels $0, 1, 2, 3$ (right panel). Dashed  curves for right panel: $F\left(k\mid \mathbf{X}\right)$, from left to right $k=0,1,2$.  Horizontal lines: $s+\epsilon, s,s-\epsilon$.\label{fkxplotfinite}}
	\end{center}
\end{figure}
	
	\subsection{Bandwidth Selection}\label{sec:band}
	An important choice left to be made is the bandwidth $\epsilon_n$.
	In nonparametric regression settings, there is a large body of literature on  bandwidth selection, and cross-validation approaches are routinely employed.
	However, in our application,  as discussed in Section \ref{sec:exam}, $\hat{U}(\cdot;\bm{\beta})$ is not a standard nonparametric regression estimator,  in particular when $Y$ has an infinite support. In this section, we provide a pragmatic bandwidth selector based on an approximation to the quasi-empirical residual distribution function.
	
	Generalizing the example of binary data, for each fixed integer $k$, define
	$$\hat{U}_k(s;\bm{\beta})=\sum_{i=1}^n\left[\frac{K\left[\left(F\left(k\mid \mathbf{X}_i\right)-s\right)/\epsilon_n\right]}{\sum_{j=1}^n{K\left[\left(F\left(k\mid \mathbf{X}_j\right)-s\right)/\epsilon_n\right]}}1\left( Y_{i}\leq k\right)\right].$$
	Here $\hat{U}_k(s;\bm{\beta})$ can be recognized as a Nadaraya–Watson estimator for the function $m$ such that	$P\left(Y\leq k\mid F\left(k\mid \mathbf{X}\right)\right)=m\left(F\left(k\mid \mathbf{X}\right)\right)$, which is apparently the identity function under the true model due to the fact that $P\left(Y_i\leq k\mid F\left(k\mid \mathbf{X}_i\right)=s\right)=s$, as discussed in Section \ref{sec:con}.
	From the observation of \eqref{equ:hdensity},
	all the $F\left(k\mid \mathbf{X}\right),k=0,1,\ldots$ contributes to $\hat{U}(\cdot;\bm{\beta})$, and
	the main body  of $\hat{U}(\cdot;\bm{\beta})$ can be approximated by stacking  a sequence of  
	$\hat{U}_k(s;\bm{\beta})$, i.e.
	$$\hat{U}(s;\bm{\beta})\approx \sum_{i=1}^n\sum _{ k}
	\left[\frac{K\left[\left(F\left(k\mid \mathbf{X}_i\right)-s\right)/\epsilon_n\right]}{\sum_{j=1}^n\sum _{k}{K\left[\left(F\left(k\mid \mathbf{X}_j\right)-s\right)/\epsilon_n\right]}M(k,\mathbf{X}_j)}1\left( Y_{i}\leq k\right)M(k,\mathbf{X}_i)\right],$$
	where 
	$M(k,\mathbf{X}_i)$ is a weight function such that  $0.1\leq F\left(k\mid \mathbf{X}_i\right)\leq 0.9$ in order to trim out boundary observations. 
	The above approximation   is essentially a nonparametric regression estimator for the composite data  $\left(F\left(k\mid \mathbf{X}_i\right),1(Y_i\leq k)\right),  i=1,\ldots,n, ~0.1\leq F\left(k\mid \mathbf{X}_i\right)\leq 0.9$. 
	We can thereby choose $\epsilon_n$ by leveraging   leave-one-out cross-validation and   the established bandwidth selection algorithms for nonparametric regression   (e.g. \citealt{racine2004nonparametric}). 
	A simulation study which  demonstrates the sensitivity of $\hat{U}(\cdot;\bm{\beta})$  to different bandwidths and the effectiveness of the proposed bandwidth selection rule will be  presented in Appendix \ref{sec:addsim}.
	The choice of bandwidth plays a more important role when data are highly discrete.

	\subsection{Asymptotics}\label{asym}

	In this section, we study  the asymptotic properties of the quasi-empirical residual distribution function $\hat{U}(\cdot;\hat{ {\bm\beta}})$ defined  in \eqref{esti}. 
	For the sake of simplicity, here we  show the asymptotic properties when regression is conducted on a single location parameter, though our methodology is applicable to more general settings, for instance, double GLMs wherein the dispersion parameter is a function of covariates as well, or zero-inflated Poisson models with more than one location parameter. We will demonstrate the usage of our methodology in more general situations through numerical examples.
	
	Let $\mu=\mathbf{X}^\top\bm{\beta}$ be the random location parameter. 
	For example, in Poisson GLMs, the mean  is related to the location parameter through a link function $\lambda=\exp(\mu)$, and in logistic regression the probability of one is $1/(1+\exp(-\mu))$. 
	Under an ordinal logistic regression model with proportionality, $F\left(k\mid \mathbf{X}\right)=G(\alpha_k\mid\mu),$  where  $\alpha_k$ is the threshold, and $G(\cdot\mid\mu)$ is the distribution function of a  logistic or normal random variable with mean $\mu$.
	For a fixed $k$, we assume $F\left(k\mid \mathbf{X}\right)$ is a monotone decreasing function of $\mu$, which is satisfied for many commonly used models including logistic, Poisson, and negative binomial regression models and ordinal regression with proportionality; see verification in the supplementary material.

	As pointed out in the previous section, $H(s;\mathbf{X})$ is not a continuous function of $\mu$. Consequently,  the density of $H(s;\mathbf{X})$, i.e. $f_{H(s;\mathbf{X})}$, is not smooth. The density of $H( s;\mathbf{X})$ at a point other than $s$, i.e.
	$f_{H( s;\mathbf{X})}(s+\epsilon)$ for $\epsilon\neq 0$, has a different form  from $f_{H( s;\mathbf{X})}(s)$.
	For a small $k$ such as $k\leq 5$ in Figure~\ref{fkxplotapp}, 
	$f_{F\left(k\mid \mathbf{X}\right)}$ contributes to $f_{H( s;\mathbf{X})}$ at $s+\epsilon$  when applying a transformation of random variables.
	While for a large $k$ such as $k=15$,  
	the density of $F\left(15\mid \mathbf{X}\right)$ does not contribute to the density of $f_{H( s;\mathbf{X})}(s+\epsilon)$.  Therefore, in this example,
	\begin{align}\label{eq:densbound}
		\sum_{k=0}^{5}f_{F\left(k\mid \mathbf{X}\right)}(s+\epsilon)\leq f_{H( s;\mathbf{X})}(s+\epsilon)<\sum_{k=0}^{\infty}f_{F\left(k\mid \mathbf{X}\right)}(s+\epsilon).
	\end{align}
	That is, compared with  \eqref{equ:hdensity}, 
	$f_{H( s;\mathbf{X})}$ is not smooth due to loss of $f_{F\left(k\mid \mathbf{X}\right)}$ curves contributing to $f_{H( s;\mathbf{X})}$ at $s+\epsilon$.
	The non-smoothness issue is less of a concern for variables with a finite range. When $\epsilon$ takes a small value $\epsilon_n$ which goes to 0, a finite number of  jump points  would be excluded from the $\epsilon_n$-neighborhood of $s$, 
	 as in Figure~\ref{fkxplotfinite}. 

	To exclude the boundary effect and achieve uniform convergence,  we focus on a closed subset of $(0,1)$ denoted as $[s_L,s_U]$.
	Let 
	$V$ be a subset of $[s_L,s_U]$ such that, for $s\in V$, $g(s)>0$,  which guarantees the availability of data points.  Let the bandwidth  $\epsilon_n$ satisfy that $\epsilon_n\rightarrow0$ and $n\epsilon_n\rightarrow\infty$ as $n\rightarrow\infty$, and  $n\epsilon_n^5=O(1)$.   
	Assume we can interchange the derivatives and the limits, then the  derivatives are 
	$g'(s)=\sum_{k=0}^{\infty}f_{F\left(k\mid \mathbf{X}\right)}'(s)$, $g''(s)=\sum_{k=0}^{\infty}f_{F\left(k\mid \mathbf{X}\right)}''(s)$.  
	Let $K$ be a symmetric kernel function with a compact support,
	and denote $R_2(K)=\int K(u)^2\mathrm{d}u$, $\kappa_2=\int u^2K(u)\mathrm{d}u$. 
	In addition,  we  adopt kernel functions satisfying H\"older conditions with exponent 2, i.e., there is a constant $\alpha_1$ such that 
	$$\mid K(u)-K(v)\mid \leq \alpha_1\mid u-v\mid ^2.$$
	Kernels with a high order of smoothness, e.g., Epanechnikov and quartic  kernels, satisfy this condition.
	
	With the regularity conditions described in Appendix  \ref{sec:assume}, we 
	have the following uniform convergence result of the quasi-empirical residual distribution function, when the model is correctly specified and the estimated parameters are plugged in.

	\begin{theorem}
		\label{pamsee}
		When the regression model is correctly specified, with the estimators of the model parameters plugged in,
		define $\hat{Z}_n(s)=\sqrt{n\epsilon_n}\left(\hat{U}(s;\hat{\bm\beta})-s\right)$.
		Under   Assumptions \ref{monofnew}, \ref{monof} , \ref{lips}, and \ref{bigop},
		and furthermore, if we assume $n\epsilon_n^5\rightarrow R_{\epsilon}$ and adopt kernel functions satisfying H\"older conditions with exponent 2, then
		uniformly in $V$,  $$\hat{Z}_n\rightsquigarrow Z,$$
		where
		$Z$  is a Gaussian process with pointwise mean  $\kappa_2g'(s)\sqrt{R_{\epsilon}}/g(s)$ and variance $R_2(K)s(1-s)/g(s)$. In addition, the covariance of $Z$ is zero  for any two distinct points.
	\end{theorem}

	To analyze the asymptotic results, first, 
	the discrepancy of the proposed function with its null pattern (the identity function) depends on both $m$ and $n$, which is reflected in the pointwise bias $\kappa_2g'(s)\epsilon_n^2/g(s)$ and variance $s(1-s)R_2(K)/\left[n\epsilon_ng(s)\right]$. Even if the  data are highly discrete, i.e. $m$ is small, the bias and variance go to 0 when the sample size is large. In contrast, the deviance residuals are normally distributed  with an error term of order at least $O_p(m^{-1/2})$ which cannot be improved with large sample sizes. Meanwhile, a large $m$ leads to  a large value of $g$ and thereby a small variance.   
	Second, since we use a subset of  data to construct $\hat{U}(\cdot;\hat{ {\bm\beta}})$, the proposed tool has a slower convergence rate than ${n}^{-1/2}$ as for deviance residuals. Therefore, the proposed tool requires a larger sample size for satisfactory performance.

	\section{Simulation}\label{sec:simulation}
	In this section, we use a variety of numerical examples to demonstrate  model assessment using our quasi-empirical residual distribution function.  We examine two important aspects: the proximity  of $\hat{U}(\cdot;\hat{\bm\beta})$ to the null pattern under true models, and its discrepancy with the null pattern under misspecified models. Throughout this section, the bandwidth is selected using the approach proposed in Section \ref{sec:band}, and the Epanechnikov kernel is used.
	\subsection{Closeness to Null Patterns under  Correct Model Specification}\label{sec:truepois}
	
	\textit{Poisson examples.}
	As a valid assessment tool, it is essential to guarantee the closeness to the null pattern if the model is correctly specified. This has been an issue for commonly used residuals including deviance and Pearson residuals when the data are highly discrete. We first explore the effects of the discreteness level, i.e. $m$-asymptotics, through Poisson GLM examples with a log link. Let the location parameter be $\mu =  \beta_0 +X_1\beta_1+X_2\beta_2$, where $X_1 \sim N(0,1)$, and $X_2$ is a dummy variable with probability of 1 as 0.7. The covariates $X_1$ and $X_2$ are independent. We conduct simulations with three levels of discreteness:
	\begin{itemize}
		\item Small mean: $\beta_0 = -2, \beta_1 = 2,\beta_2=1$. 
		\item Medium mean: $\beta_0 = 0, \beta_1 = 2,\beta_2=1$. 
		\item Large mean: $\beta_0 = 5, \beta_1 = 2,\beta_2=1$.
	\end{itemize}
	For each of the experiments, we generate  data, fit the correct regression model, and compute the 
	residuals or quasi-empirical residual distribution function. We then summarize the results graphically by providing the curve of $\hat{U}(\cdot;\hat{ {\bm\beta}})$ and the P-P plots of the residuals. Given a correct model,  the null pattern should be along the diagonal.
	
	Figure~\ref{fig:pois500}  presents the results with sample size 500. 
	The upper row corresponds to the small mean scenario. As anticipated, the deviance and Pearson residuals are far apart from normality due to small $m$, while the proposed quasi-empirical residual distribution function is  close to the identity function. That is, our method provides more reliable conclusions in cases with a high level of discreteness. When we move to the middle row corresponding to the medium mean level, our method keeps the pattern along the diagonal, and the other two residuals get closer to being normally distributed. As the mean increases to the large case (bottom row of Figure~\ref{fig:pois500}), all three methods appear close  to the null pattern. 
	
	\begin{figure}[!h]\centering
	\includegraphics[width=.9\textwidth]{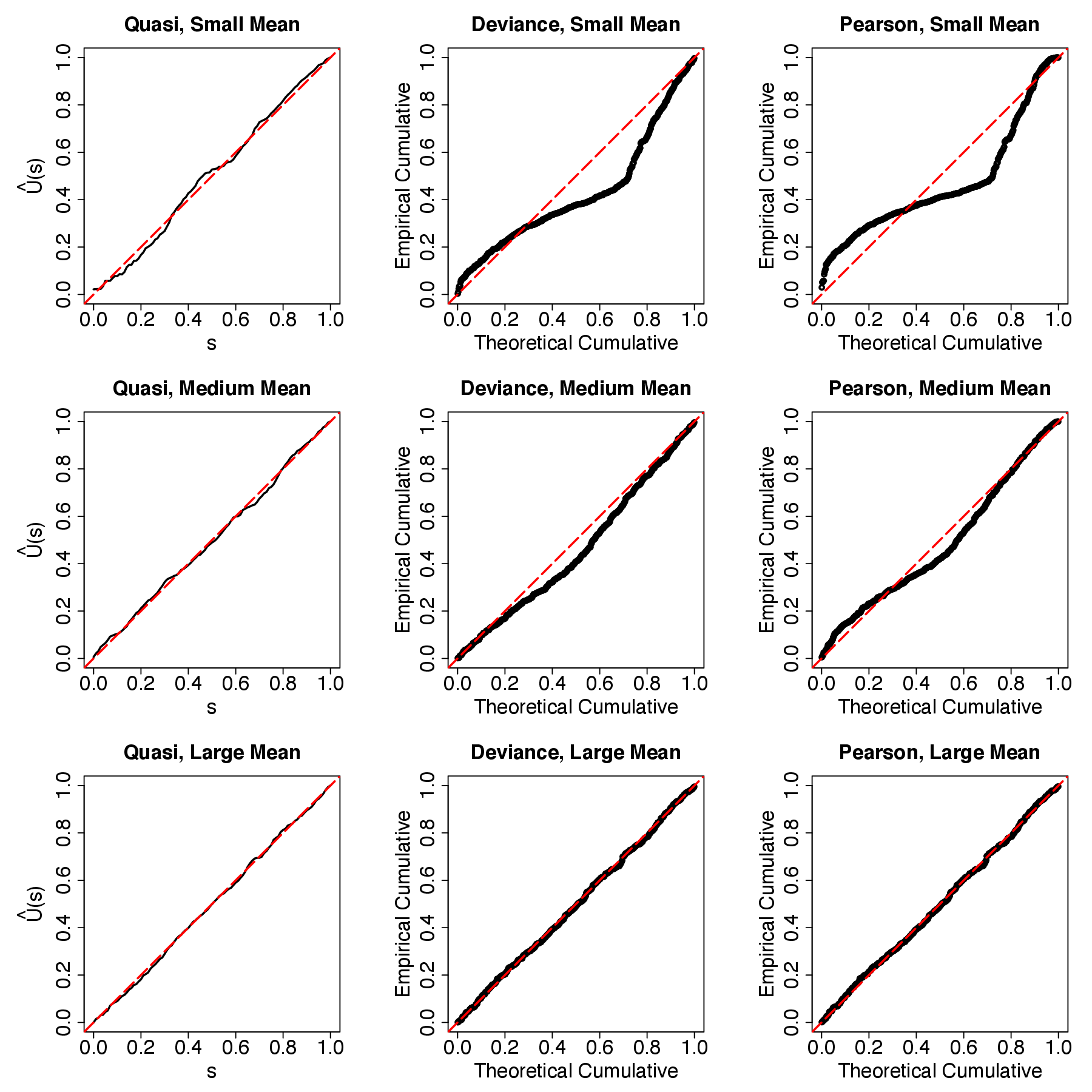}
		
		\caption{Assessment plots for Poisson outcomes under correct models. The three rows correspond to small, medium, and large mean levels. Sample size 500.\label{fig:pois500}}
	\end{figure}

	The demonstrative  results in Figure~\ref{fig:pois500} are based on one randomly selected sample. We further include the results with replications in  Figure~\ref{fig:poisrep500} to visualize the standard error.  The diagonal is within the confidence bands of the proposed method even under a high level of discreteness, though the proposed method has a larger variance under the  small mean scenario compared with the other two types of residuals.
	When the mean increases to medium level, the variance of the proposed tool gets smaller, which is consistent with the theoretical result of Theorem \ref{pamsee}.

		\begin{figure}[!h]\centering\includegraphics[width=.9\textwidth]{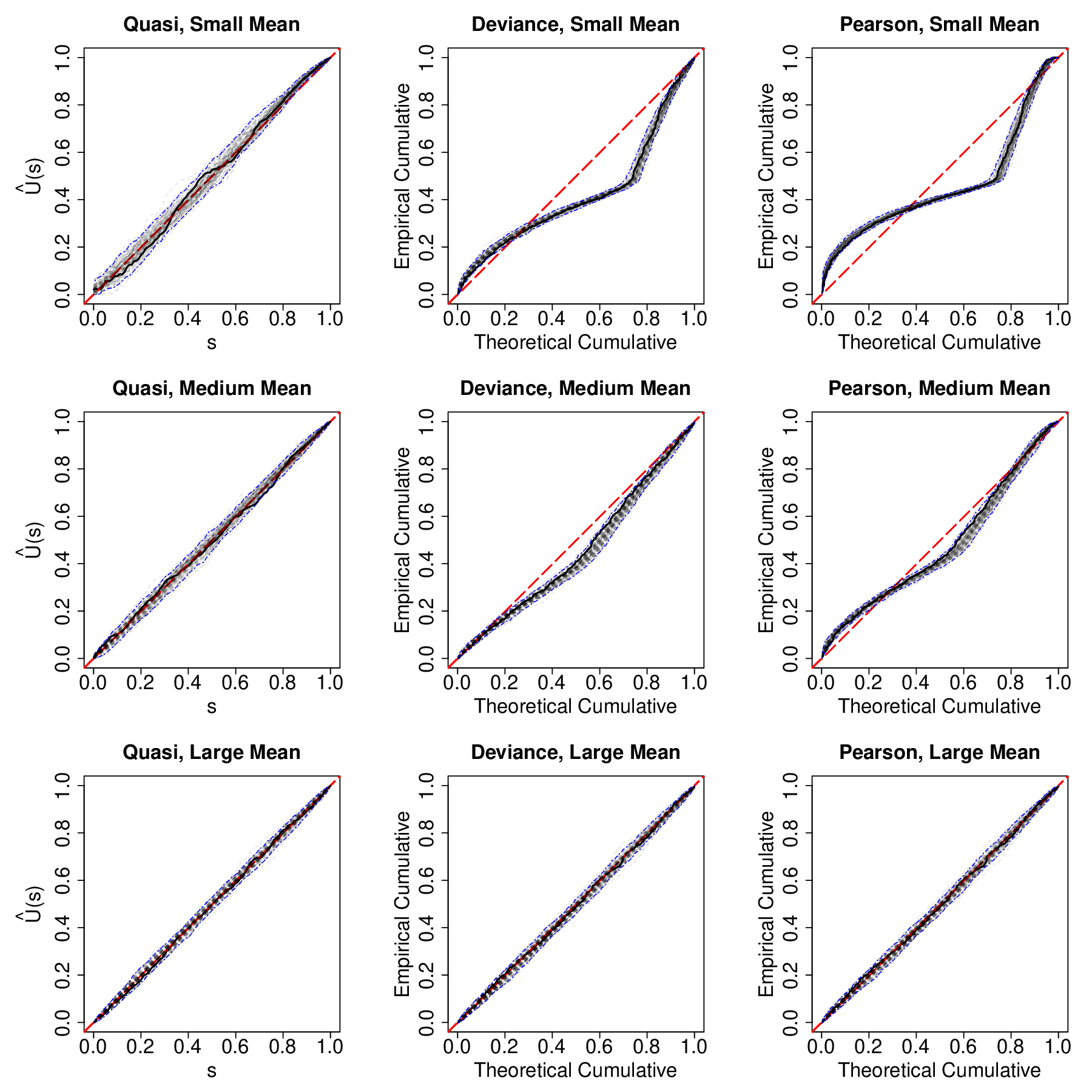}
		
		\caption{Assessment plots for Poisson outcomes under correct models with 100 replications.  Sample size 500.
			The  solid curves correspond to the samples shown in Figure~\ref{fig:pois500},  the  dashed curves are for  replications, and the  dash-dot curves are the 95\% confidence bands.
			\label{fig:poisrep500}}
	\end{figure}

	As discussed in Section \ref{asym}, the proposed tool requires a relatively large sample size for highly discrete data. In the supplementary material, we include  examples of smaller sample sizes, from which we can see that the proposed tool might not be suitable for the cases in which the sample size is smaller than 50, in particular when  data are highly discrete.
	In addition, we  demonstrate in the supplementary material the necessity of at least one continuous covariate for the proposed tool.  Lacking continuous covariates is nevertheless less of an issue if the mean is large.

	\textit{Binary examples.} We now consider an example of binary outcomes with a logistic regression model. 
	We set $\text{logit}(p_1)=\beta_0+\beta_1X_1+\beta_2X_2$ where $p_1$ is the probability of 1, $\beta_0=-2, \beta_1=2,\beta_2=1$, $X_1\sim N(0,1)$, and $X_2$ is a dummy variable with probability of one as 0.7. Figure~\ref{fig:binary} summarizes the results. 
	Due to the undesirable performance of Pearson residuals, we include  randomized quantile residuals instead.
	From the first row,  when the sample size is 100, the difficulty resulting from a high level of discreteness is clear since 
	the deviance  residuals are far apart from normality. However, the proposed method is reasonably close to the null pattern. Randomized quantile residuals do not seem to closely follow a normal distribution as desired in  the top row. This could be the result of a specific realization of the injected noise.
	We further demonstrate this point in Appendix \ref{sec:addsim} (Figure~\ref{fig:random}).
	When we increase the sample size to 2000, 
	the deviance residuals do not improve, whereas the proposed method gets closer to the null pattern. 
	This is consistent with our theoretical results in Section \ref{asym}. For the proposed method, a large sample size is a remedy for the poor performance induced by a high level of discreteness, while the discrepancy of   deviance and Pearson residuals with normality cannot be fixed by increasing the sample size. This property of the proposed method is  especially useful under the current trend of utilizing large datasets.

	\begin{figure}[!h]\centering
		\centering
		\includegraphics[width=.9\textwidth]{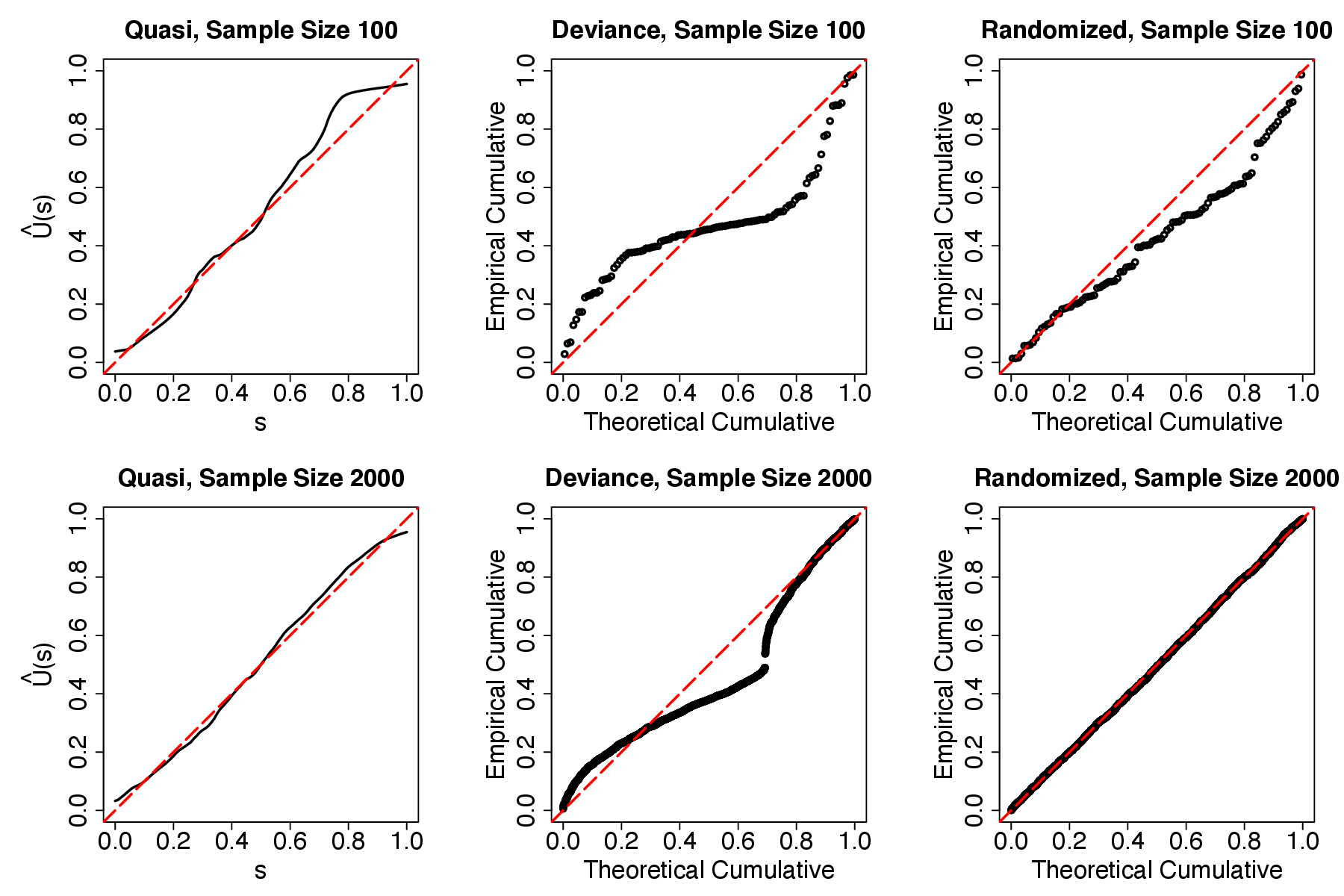} 
		
		\caption{ Assessment plots for binary outcomes under correct models. The two rows correspond to sample sizes 100 and 2000. \label{fig:binary}}
	\end{figure}

	\subsection{Detection of Misspecification}\label{sec:miss}
	As concluded in the previous section, the proposed method is closer to its null pattern under the true model compared with other residuals in various settings. 
	In this section, we check the ability of the proposed method to detect common causes of model misspecification including omission of covariates,  overdispersion, and incorrect link functions for count data. Experiments on diagnosing underdispersion for count data and 
	 non-proportionality in ordinal regression are included in the supplementary material.
	
	\textit{Missing covariates.} We first demonstrate that our quasi-empirical residual distribution function is a useful  tool for detecting missing covariates through a Poisson example. The underlying location parameter is  $\mu =  \beta_0 +X_1\beta_1+X_2\beta_2$, where $X_1,X_2 \sim N(0,1)$ independently, and $\beta_0=-2, \beta_1=2,\beta_2=1.5$. 
	Under the misspecified model, the covariate $X_2$ is missing. Figure~\ref{fig:omitsmall500} includes the results 
	in which we compare the proposed method with deviance  and randomized quantile residuals. 
	By comparing the top row which is for the true model with the bottom row corresponding to the misspecified model, we can see that our tool is illuminating in the sense that the curve is close to the null pattern under the true model, while it shows a large discrepancy when a covariate is missing, even with a high level of discreteness. For deviance residuals,   there is no significant improvement from the misspecified model to the true model, and thus it is hard to draw a conclusion about whether the model is sufficient for the data. Randomized quantile residuals show a slight discrepancy under the misspecified model.
	The results for medium mean level with $\beta_0=0$ is included in the the supplementary material,
	wherein all the methods become more informative, and the proposed method still outperforms.
	We also include an example of the proposed tool detecting the missingness of a  quadratic term in the supplementary material.
	
	\begin{figure}[!h]\centering
		\centering\includegraphics[width=.9\textwidth]{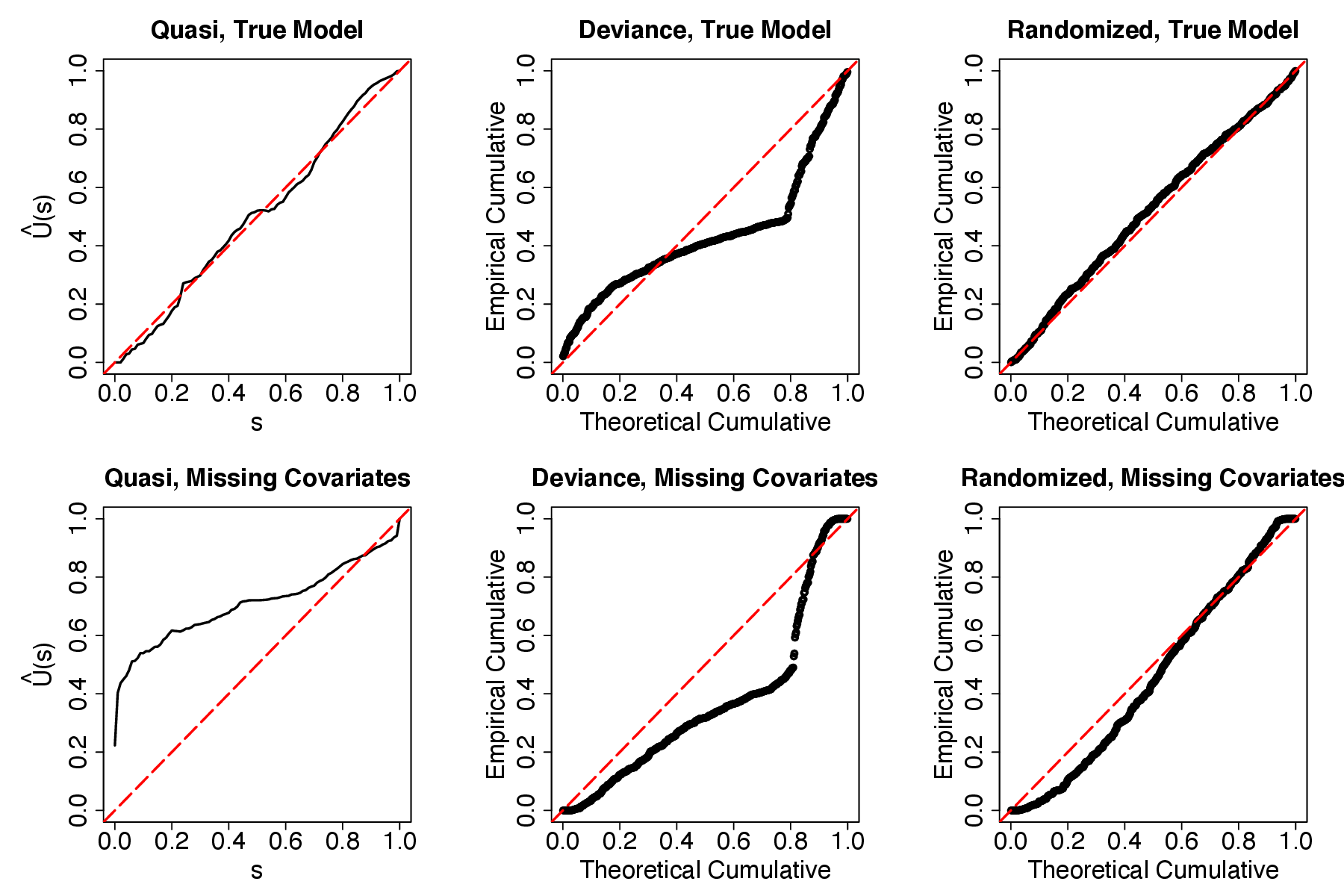}
		\caption{Assessment  plots for Poisson outcomes when a covariate is missing in the small mean scenario. Top row: correct model. Bottom row: a covariate is omitted. The sample size is 500.\label{fig:omitsmall500}}
	\end{figure}

	\textit{Overdispersion.} For discrete outcomes, especially count data, one of the most common issues  is overdispersion.  We next use a numerical example to examine the ability of the proposed method as well as other residuals to detect overdispersion. 
	We  generate data using a negative binomial distribution with the same mean structure as  the Poisson outcomes in Section \ref{sec:truepois}, and the size parameter is set to be 2. 
	For the misspecified model, we fit the data with a Poisson GLM and thus overdispersion is present.
	Figure~\ref{fig:dispsmall500} shows the results in the small mean scenario. The top row includes the results for the true model, while the bottom row shows the results under the misspecified model. By examining the first column, 
	we can see that the proposed tool is again informative. 
	In contrast, the deviance residuals show a large discrepancy under both models, while randomized quantile residuals are not sensitive to misspecification in this case. 
	Besides overdispersion, our method can also detect the issue of underdispersion, which we illustrate in the supplementary material.
	
	\begin{figure}[!h]\centering
		\centering
\includegraphics[width=.9\textwidth]{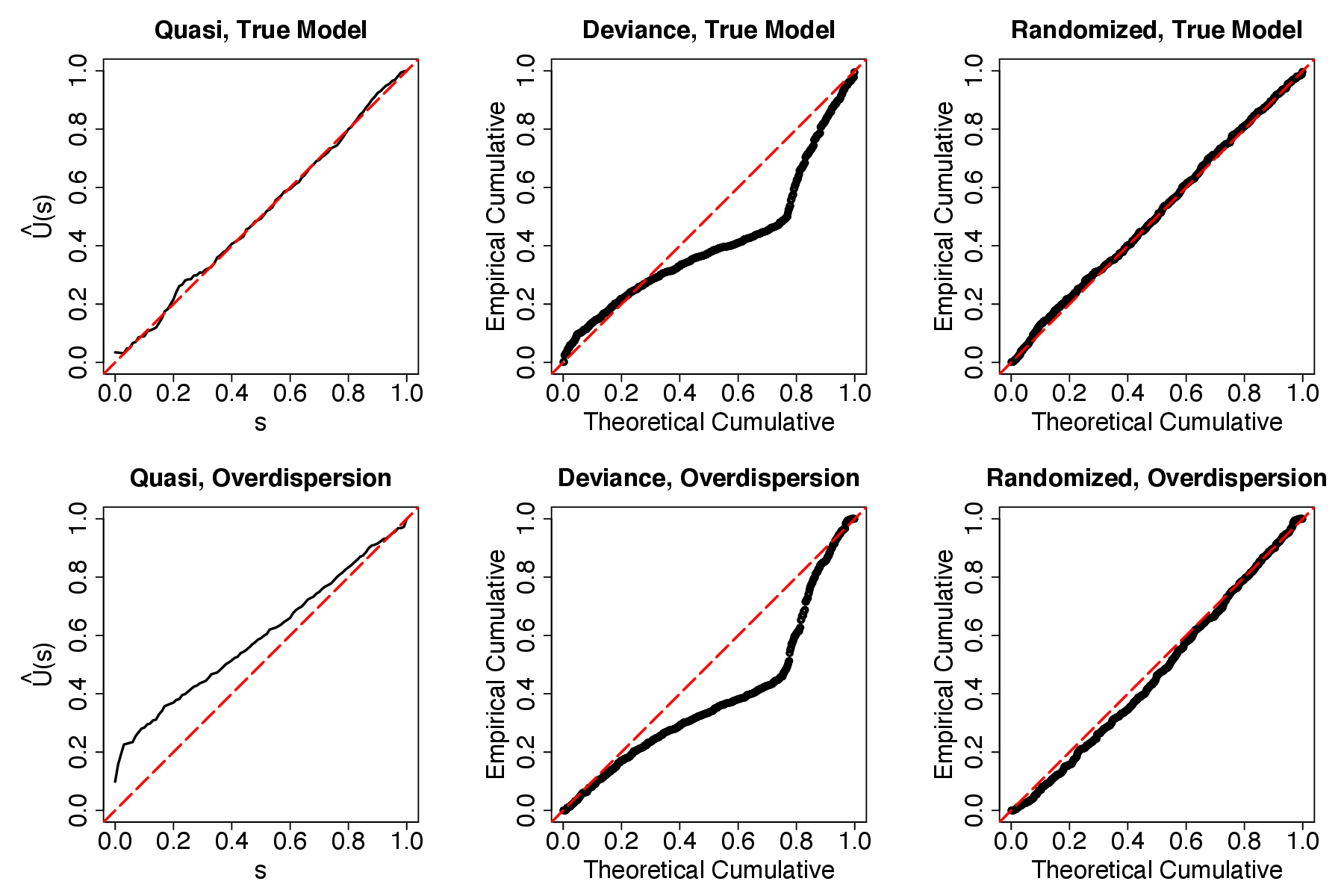}
		
		\caption{Graphical detection for overdispersion in negative binomial outcomes under the small mean scenario. The top row corresponds to the true model, and the bottom row shows the results when a Poisson GLM is mistakenly  used. Sample size: 500. \label{fig:dispsmall500}}
	\end{figure}

\textit{Zero inflation.}
Zero-inflated Poisson models are also commonly adopted to tackle  overdispersed data.  We now include an example of  zero-inflated Poisson  models to demonstrate the  usage of the proposed method for non-exponential distributions. 
The probability of excess zero is modeled with $\mathrm{logit}(p_0)=\beta_{00}+\beta_{10}X_1$, and the Poisson component has a mean $\lambda=\exp\left(\beta_0+\beta_1X_1+\beta_2X_2\right)$, where $X_1\sim N(0,1)$ and $X_2$ is a dummy variable with probability of 1 as 0.7, and $(\beta_{00},\beta_{10},\beta_0,\beta_1,\beta_2)=(-2,2,0,2,1)$. We compare  the true model with  a Poisson model using the proposed tool. 
Figure~\ref{fig:zeroinf500med} shows the results from which we can see the proposed method and randomized quantile residuals can help identify the insufficiency of fitting. In contrast, Pearson residuals show a large discrepancy over the whole range under the true model and thereby is not revealing.

\begin{figure}[!h]\centering
	\centering
		\includegraphics[width=.9\textwidth]{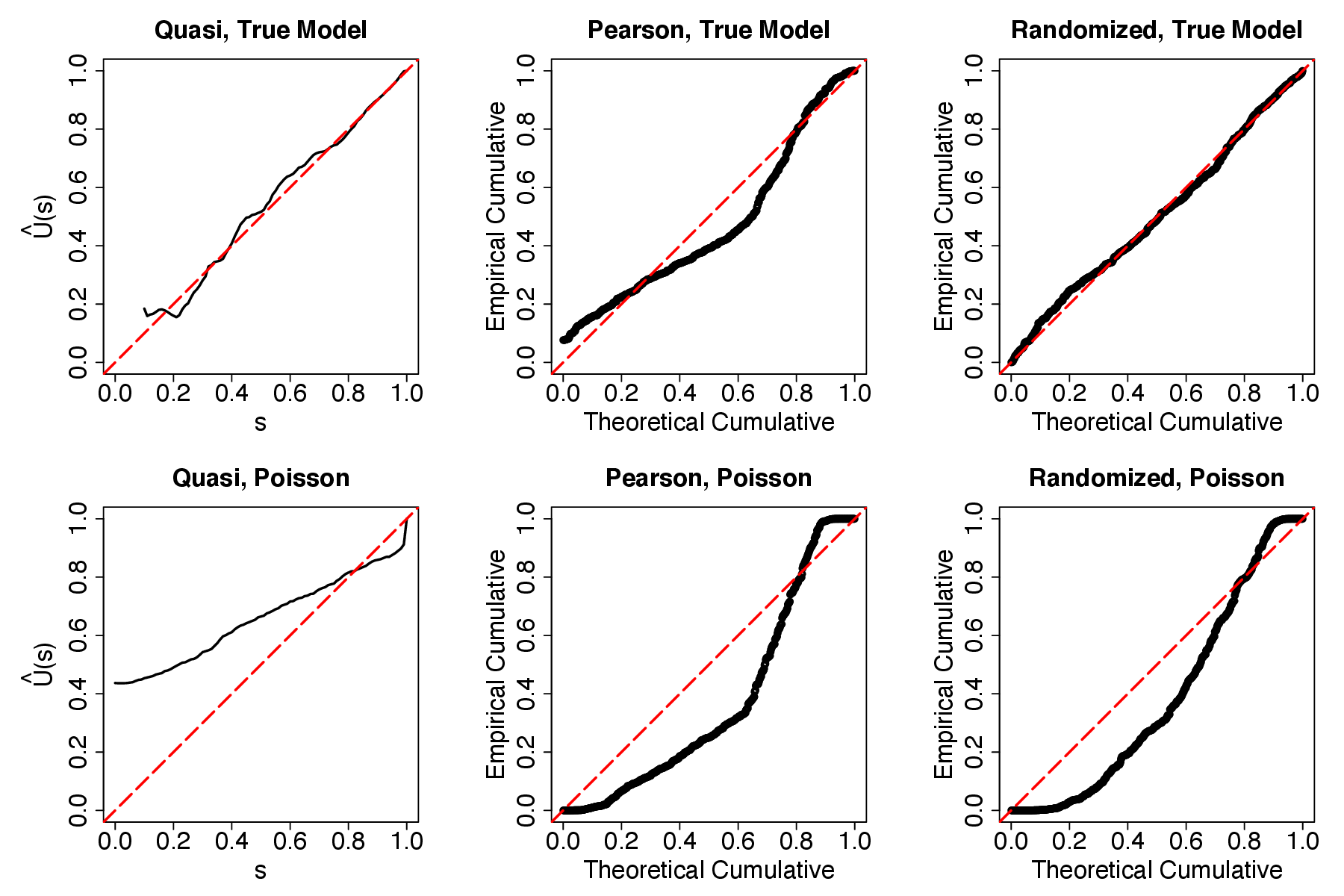}
	
	\caption{Graphical assessment of models with zero-inflated Poisson outcomes. The top row corresponds to the true model, and the bottom row shows the results when the data are fit with a Poisson GLM. Sample size 500. \label{fig:zeroinf500med}}
\end{figure}

It is noticeable that under the correctly specified model, the curve of $\hat{U}(\cdot;\hat{ {\bm\beta}})$ seems  unstable at the lower left corner.  
Figure~\ref{fig:zeroinfhistp0} includes the histogram of $F(0\mid\mathbf{X})=p_0+(1-p_0)\exp(-\lambda)$, from which we can see that there is no observation below 0.16. This explains the aberrant behavior of $\hat{U}(s;\hat{ {\bm\beta}})$ for small $s$ values, which should be downweighted.
We also notice that  $\hat{U}(\cdot;\hat{\bm{\beta}})$ is not monotone, in particular in the area with sparse data.
Asymptotically, $\hat{U}(\cdot;\hat{\bm{\beta}})$ converges to a monotonic function, the identity function. However, with finite samples,
it is not guaranteed that $\hat{U}(\cdot;\hat{\bm{\beta}})$ is monotone. The techniques for nonparametric regression with monotonicity constraints (e.g. \citealt{hall2001nonparametric}) might be a remedy for this matter with finite samples; we leave it as future work.

\begin{figure}[!h]\centering

\includegraphics[width=.3\textwidth]{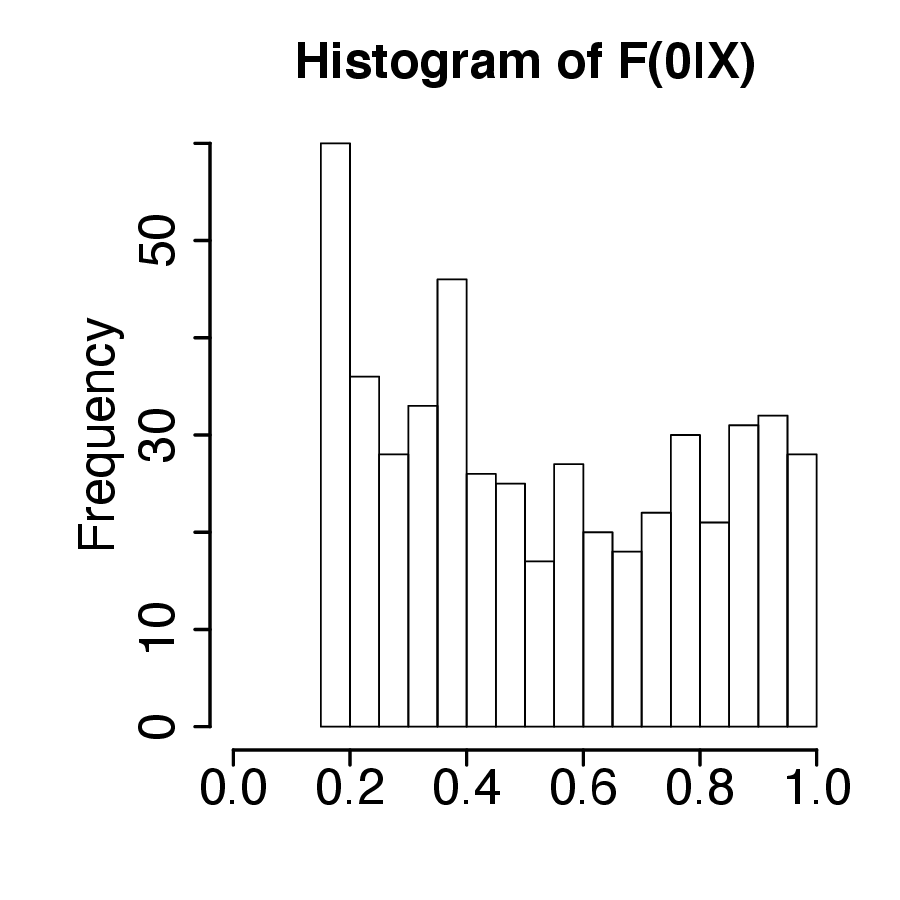}
	
	\caption{Histogram of the probabilities of  zero  in the zero-inflated Poisson example in Figure~\ref{fig:zeroinf500med}. \label{fig:zeroinfhistp0}}
\end{figure}

	\textit{Incorrect link function.} To provide an assessment of link functions, we  present a Poisson example. The true link function is the square root function, i.e., the mean $\lambda=\left(\beta_0+\beta_1X_1+\beta_2X_2\right)^2$ where $X_1\sim N(0,1)$ and $X_2$ is a binary variable with probability of 1 as 0.7, and $(\beta_0,\beta_1, \beta_2)=(0,1,1)$. In the misspecified model, the log link is used. The comparative results are included in Figure~\ref{fig:link500med}. By comparing across the two rows, we can see the proposed method and the randomized quantile residuals show the transition from being close to the diagonal to a disagreement. The deviance residuals also show a larger discrepancy with an incorrect link function, though it has a noticeable difference with the null pattern under the true model.  Nonetheless we also notice that in many scenarios, the log link can provide reasonable fitting even if the true link function is the square root or identity function.

	\begin{figure}[!h]\centering\includegraphics[width=.9\textwidth]{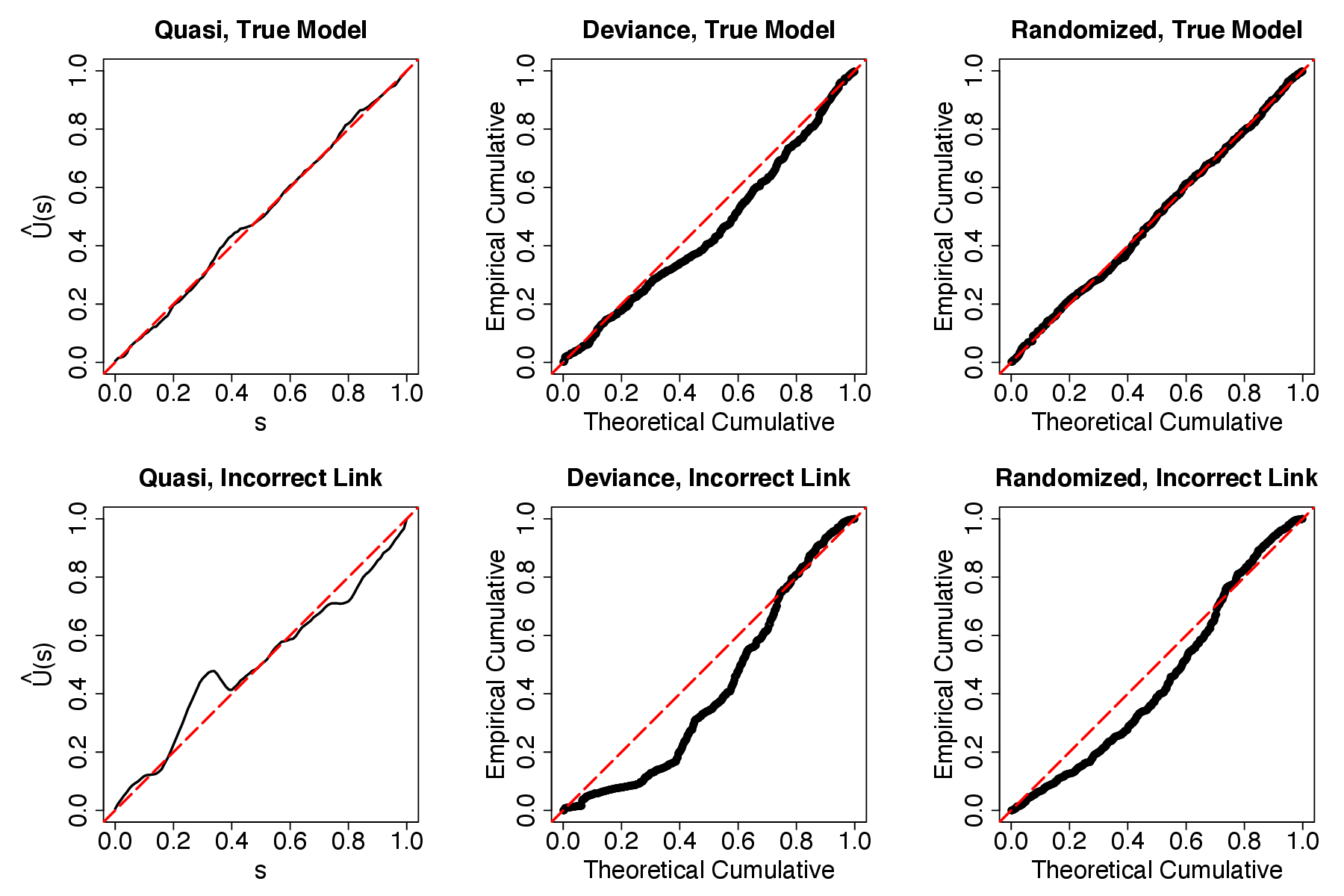}
		
		\caption{Graphical assessment  of link function for Poisson models. Sample size 500. The two rows correspond to right link (square root) and wrong link (log). \label{fig:link500med}}
	\end{figure}
	\subsection{Possible Explanations for Discrepancies}\label{sec:reason}
	In the previous sections, we showed that the proposed method can provide an accurate  signal on model adequacy, superior to the established residuals. In this section, we provide several examples to unravel the discrepancies between $\hat{U}(\cdot;\hat{\bm{\beta}})$ and the identity function.
	
	From the construction of $\hat{U}(\cdot;\hat{\bm{\beta}})$, for a given  value of $s$ and  observation $i$, under the fitted distribution function denoted by $\hat{F}$, 
	the observation is assigned with a nonzero weight if
	there exists an integer $k$ such that $\hat{F}(k\mid\mathbf{X}_i)\approx s$.
	The resulting $\hat{U}(s;\hat{\bm{\beta}})$ is an estimator of the probability $F(k\mid\mathbf{X}_i)$ for the selected observations.
	If the curve is above the identity, it indicates  ${F}(k\mid\mathbf{X}_i)>s\approx \hat{F}(k\mid\mathbf{X}_i)$. 
	On the other hand, if the curve is below the identity function, it indicates that ${F}(k\mid\mathbf{X}_i)<\hat{F}(k\mid\mathbf{X}_i)$. Therefore, the shape of the resulting $\hat{U}(\cdot;\hat{\bm{\beta}})$ curve reflects  the relative relationship between the underlying and fitted distribution functions.
	
	\textit{Intercept.}
	For unambiguous  illustration, we investigate an example wherein  only the intercept is incorrect and other parameters are fixed. The true model is a Poisson GLM with mean $\lambda=\exp\left(\beta_0+\beta_1X_1+\beta_2 X_2\right)$, where $X_1\sim N(0,1)$,  $X_2$ is a binary variable with probability of 1 as 0.7, and $(\beta_0,\beta_1,\beta_2)=(0,2,1)$. When the model is correctly specified, we see a curve of $\hat{U}(\cdot;\hat{\bm{\beta}})$ along the diagonal  in the left panel of Figure~\ref{fig:poisoverunder}. When the intercept is mistakenly set to be 2, the mean is overestimated, and as a result, for any given $k$, ${F}(k)>\hat{F}(k)$; see discussion in Section \ref{asym}. 
	One can see that in the middle panel, the  resultant curve is above the diagonal. On the other hand, when the intercept is underestimated to be $-2$, the curve is below the diagonal.  The coefficient for a quadratic term has the similar effect, which we illustrate in the supplementary material. 
	
	\begin{figure}[!h]\centering\centering\includegraphics[width=.9\textwidth]{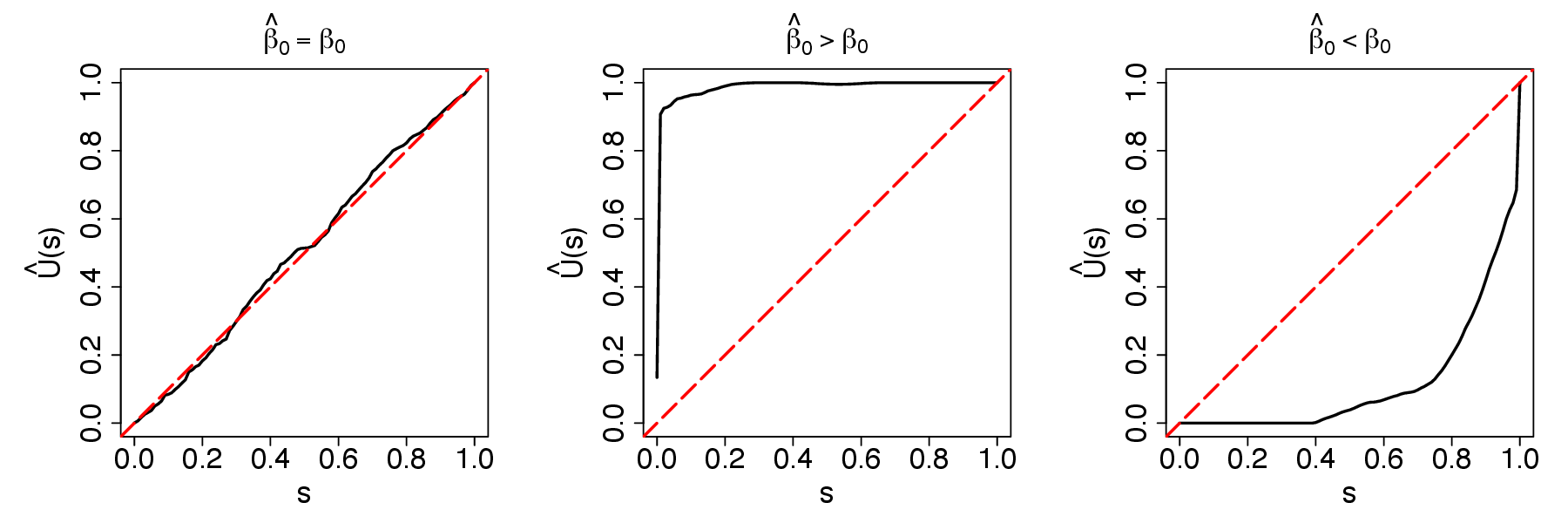}
		\caption{Graphical assessment of intercept in a  Poisson GLM. The sample size is 500.\label{fig:poisoverunder}}
	\end{figure}

	
	\textit{Overdispersion.} In GLM frameworks, the mean structure is often close to being correctly estimated even if 
	overdispersion is an issue (\citealt{mccullagh1989generalized}). In  the supplementary material, we explore the distribution function 
	of Poisson and negative binomial variables, when their   means are the same. If the  mean is  bigger than 1,  the  distribution function of the negative binomial distribution is mostly bigger than the Poisson distribution, which explains the shape of $\hat{U}(\cdot;\hat{\bm{\beta}})$ in Figure~\ref{fig:dispsmall500}.

	Our simulation studies demonstrate that our tool  provides an accurate signal on  model adequacy in various settings. In addition, the magnitude of the discrepancy between our curve and the diagonal carries information about the severity of misspecification.
	However, given its  nature  as a surrogate for the empirical residual distribution function rather than residuals themselves, our tool is not able to  identify the exact causes of misspecification. 
	Should the model be insufficient according to our method, 
	tools devoted to specific diagnostic tasks can be further employed. 
	Regression model-building process should involve iterations between assessment plots and refinements of the model, until reasonable agreement between the model and the data is reached
	(\citealt{cook1999graphs}).
	This final step can be confirmed by our assessment tool.

	\section{Analysis of Insurance Claim Frequency Data}\label{sec:data}
	In this section, we present an application of the proposed quasi-empirical residual distribution function to insurance claim frequency data. Frequency,  the number of reported claims from each policyholder, is an important component of insurance claim data and largely reveals the riskiness of a policyholder. Here, we use a dataset from the Local Government Property Insurance Fund (LGPIF) in the state of Wisconsin, USA.
	The LGPIF was established by Wisconsin government to provide property insurance for local government entities. 
	In this paper, we focus on building and contents (BC) insurance, which is the major coverage offered by the LGPIF. The dataset  contains 5660 observations from year 2006 to 2010. Table~\ref{table:empiricalcount} provides the empirical numbers of observations.  Covariates together with their summary statistics are displayed in Appendix \ref{sec:addsim} Table~\ref{tab:covariates}. Among them, the amounts of insurance coverage and deductible are continuous covariates, which are necessary for applying the proposed tool.

	\begin{table}[!h]\centering 
		\caption{Distribution  of the number of claims in the LGPIF data.
			\label{table:empiricalcount} }
		\begin{tabular}{@{\extracolsep{5pt}} cccccccc} 
			\\[-1.8ex]\hline 
			\hline \\[-1.8ex] 
			Total&0&1&2&3&4&5&$>$5\\
			\midrule
			$5660$ & $3976$ & $997$ & $333$ & $136$ & $76$ & $31$ & $111$ \\
			\bottomrule
		\end{tabular} 
	\end{table}

	We fit several commonly used count regression models to the claim frequency data: Poisson, negative binomial (NB),  zero-inflated Poisson, and zero-inflated negative binomial. 
	In addition, it can be seen from Table~\ref{table:empiricalcount} that the data contain a large number of zeros and a significant amount of ones. This motivates the usage of a zero-one-inflated Poisson model as described in \cite{frees2016multivariate}. Its distribution  function  can be expressed as 
	$$F(k)=
	\begin{cases*}
	\pi_{0}+(1-\pi_{0}-\pi_{1})\exp(-\lambda)&$k=0$,\\
	\pi_{0}+\pi_{1}+(1-\pi_{0}-\pi_{1})\sum_{i=0}^k \lambda^i\exp(-\lambda)\frac{1}{i!}&$k>0$,
	\end{cases*}$$
	where $\pi_0$ and $\pi_1$ are the probabilities of extra zeros and ones, respectively, and $\lambda$ is the  Poisson mean.
	
	We then apply the proposed assessment tool to the models. Figure~\ref{fig:data} displays   the quasi-empirical residual distribution function (solid curves). 
	The plots suggest that the zero-one-inflated Poisson  and the negative binomial models provide  satisfactory fitting, while the Poisson and  zero-inflated Poisson models fit the data poorly. 
	Interestingly, the zero-inflated negative binomial model is not as good as the negative binomial model. We notice that the zero-inflated negative binomial model focuses on zeros and tends to underestimate the probability of ones, which explains the deficiency of its fitting.
In addition, we present the confidence bands of the quasi-empirical residual distribution function constructed using bootstrap with 500  replications (dash-dot curves). We observe that the negative binomial model has wide and wiggly confidence bands in the lower left corner, indicating this model might  be unstable.
	
	\begin{figure}[!h]\centering	\includegraphics[width=.9\textwidth]{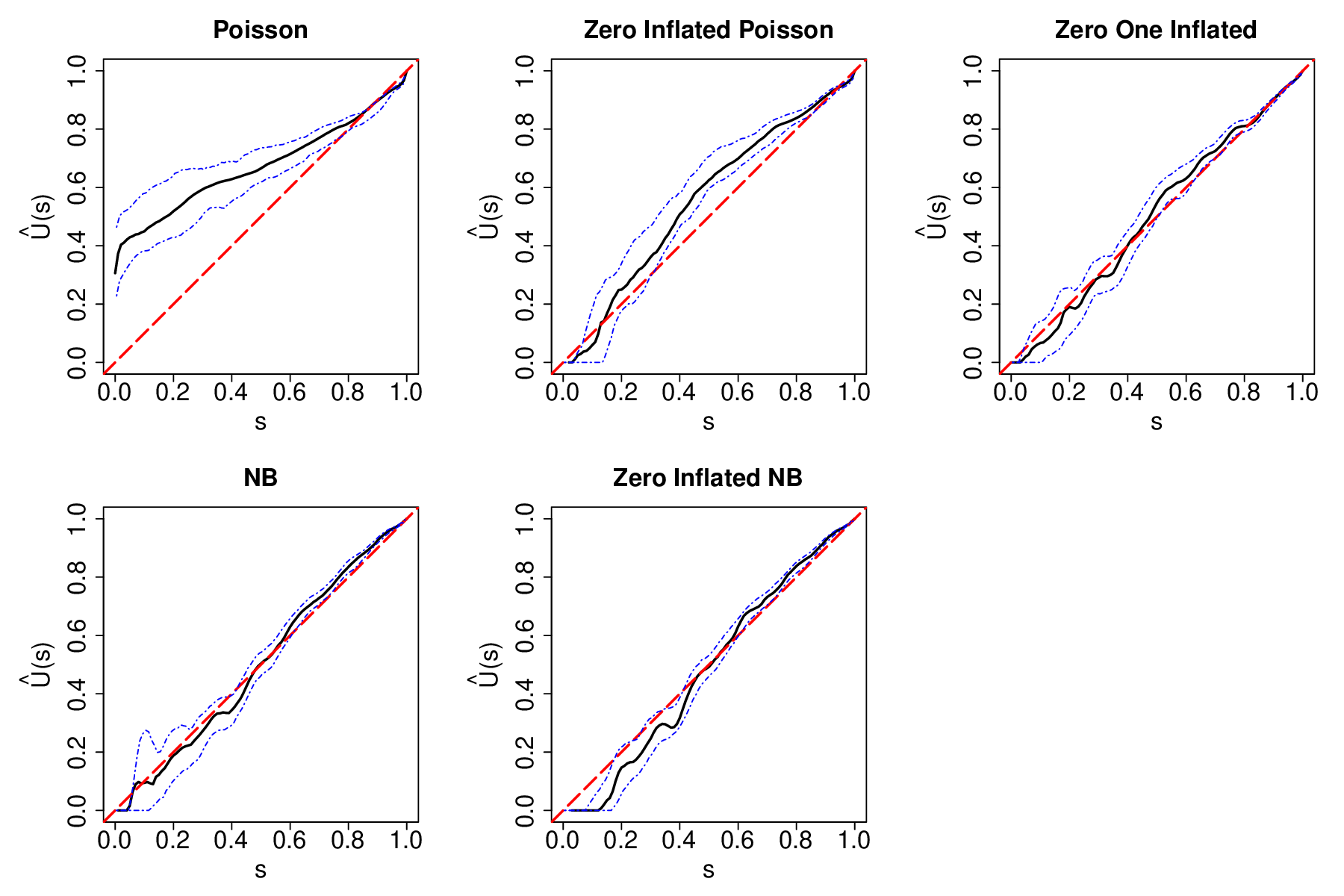}
		
		\caption{Graphical assessments  of Poisson, zero-inflated Poisson,  zero-one-inflated Poisson, negative binomial, and zero-inflated negative binomial models on the LGPIF data. \label{fig:data}}
	\end{figure}


To summarize the models' goodness-of-fit  numerically, in
Table~\ref{tab:dist}, we provide  the $L_2$-norm distances between $\hat{U}(\cdot;\hat{\bm\beta})$ and the diagonal
\begin{align}\label{eq:l2}	\left\lbrace\int_{[0.1,1]}\left(\hat{U}(s,\hat{ {\bm\beta}})-s\right)^2\mathrm{d}s\right\rbrace ^{1/2}.
\end{align}
Here, we integrate over $[0.1,1]$ since there is no grid value below $0.1$ for the zero-inflated Poisson model.  A good model should have a small $L_2$-norm distance.
We can see that the 
negative binomial model outperforms other approaches with smallest distance, followed closely by the zero-one-inflated Poisson model. 
Table~\ref{tab:dist} also contains the standard deviations of the distances across the 500 bootstrap replications. Consistent with our visual impression, 
the negative binomial model has
 a bigger standard deviation compared with the zero-one-inflated Poisson model.

\begin{table}[!htbp] \centering 
	\caption{ $L_2$-norm distances between the quasi-empirical distribution and the identity function \eqref{eq:l2}  (with their bootstrap standard errors in parenthesis) and chi-square goodness-of-fit statistics of different  models for the LGPIF data.
				\label{tab:dist} }
	\begin{tabular}{ cccccc} 
		\\[-1.8ex]\hline 
		\hline \\[-1.8ex] 
		
	&Poisson&0-Inflated Poisson&	0-1-Inflated Poisson&NB&0-Inflated  NB\\
			\midrule
	$L_2~(\times 100)$	&$18.615$ & $7.200$ & $2.775$ & $2.644$ & $4.958$ \\ 
	&(2.334) &(1.416)& (0.521) &(0.660) &(0.741)\\
Chi-square	&105.201& 154.572&  77.053 & 88.086&  98.398\\
	\bottomrule
	\end{tabular} 
\end{table} 
	
	To summarize, the negative binomial model provides best fitting for our data in terms of the $L_2$-norm distance, and it is a  more parsimonious  model  than the zero-one-inflated Poisson. On the other hand, the negative binomial model  seems less stable, reflected by its wider confidence band.
	We thereby recommend to pursue both negative binomial and zero-one-inflated Poisson models, whose 
	coefficients  are provided in Table~\ref{table:coef}.
	Our conclusion is mostly consistent with the model selection results  in \cite{frees2016multivariate} by means of chi-square goodness-of-fit  statistics (the second row of Table~\ref{tab:dist}). However, whether the best models among candidates are sufficient  remained unclear  in \cite{frees2016multivariate}. Using our tool, we can make  informative conclusions on the adequacy of the fitted models based on Figure~\ref{fig:data}, which indicates that the selected models indeed fit the data reasonably well.

	\section{Conclusions}\label{sec:conc}
	In this paper, we proposed a quasi-empirical residual distribution function for assessing regression models  with discrete outcomes. We showed the uniform convergence of the proposed tool under the correctly specified model. Through simulation studies and empirical analysis, we demonstrated that the proposed method has appealing properties,  when at least one continuous covariate is available. In particular, we showed that the quasi-empirical residual distribution function is close to the hypothesized pattern under the true model, and under  misspecified models, it shows  significant discrepancies. It was also highlighted that, even under  a high level of discreteness (e.g., binary outcomes and Poisson outcomes with small means), the proposed method gives reasonable results. In addition, as the sample size increases, its performance  improves; whereas for  commonly used assessment tools such as deviance residuals, there is a significant error term which cannot be fixed by a large sample size.

	The proposed tool has some limitations. First, it is not applicable without at least one continuous covariate, except for large mean scenarios.  Second, our tool cannot be used to generate residual versus predictor plots, and hence it may  not be possible to uniquely identify certain causes of misspecification such as missing covariates and outliers.
	One can combine our approach with  tools devoted to specific diagnostic tasks in practice. 
	Third, since our tool is based on  a subset of  data, it has a slower convergence rate than ${n}^{-1/2}$ and thus might require a relatively large sample size for satisfactory performance, in particular if data are highly discrete.
	Lastly, the proposed tool   is more complicated than other established assessment tools, as it requires tuning of the bandwidth. 

	Besides   graphical assessments of regression models, the proposed tool may be a feasible starting point for the  construction of goodness-of-fit tests to obtain conclusions with statements of statistical confidence.
	This is known to be a quite challenging issue for discrete outcomes (\citealt{mccullagh1986conditional}).
	The weak convergence results of this paper build the essential foundation for goodness-of-fit tests, however we leave this as a direction of future research.

\begin{appendices}

		\section{Additional Simulation and  Data Analysis   Results}\label{sec:addsim}

Randomized quantile residuals are based on the idea of continuization. 
However, in order to produce continuous variables, 
randomness is introduced by adding an external uniform random variable, as discussed in Section \ref{sec:intro}.
In Figure~\ref{fig:random}, 
for a given dataset and a given model, the randomized quantile residuals give contradictory conclusions with different random seeds (middle and right panels). The effects of randomness would vanish as the sample size increases or the discreteness level reduces.

\begin{figure}[!h]\centering
	\includegraphics[width=.9\textwidth]{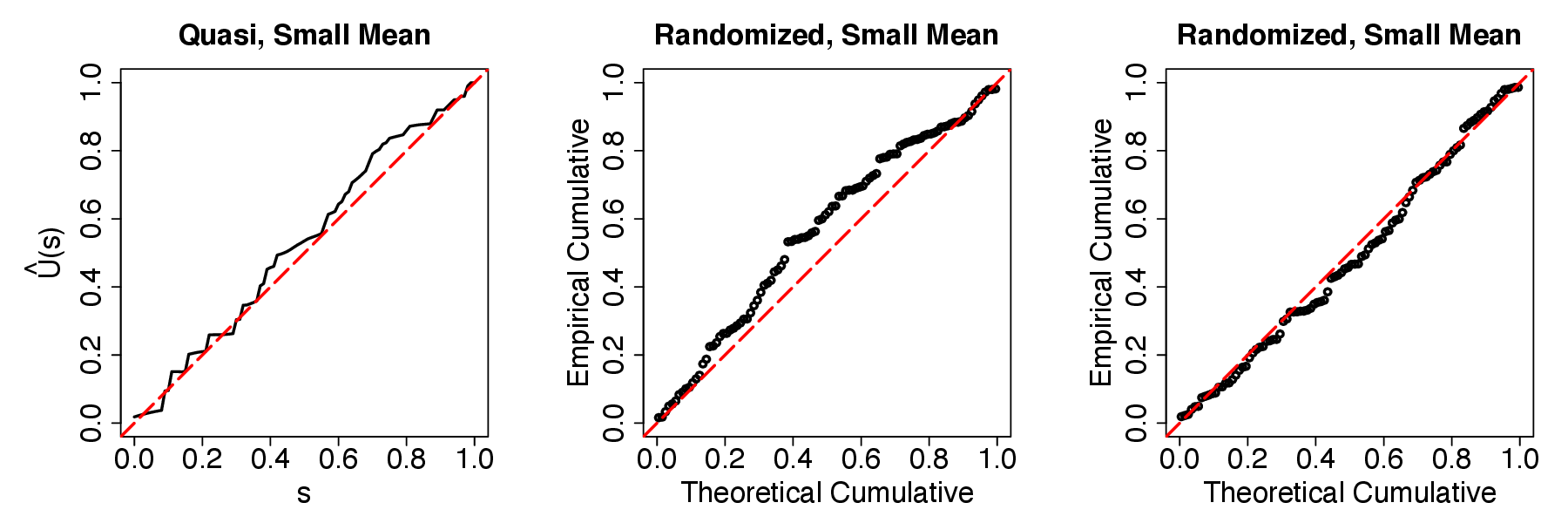} 
	
	\caption{Assessment plots for Poisson outcomes with a high discreteness level using the proposed method (left) and randomized quantile residuals with two different random seeds (middle and right). Sample size 100.\label{fig:random}}
\end{figure}


Figure~\ref{fig:poisband} demonstrates the sensitivity of the quasi-empirical residual distribution function  to different bandwidths. We can see that the bandwidth plays more of an important role under the small mean scenario. If the bandwidth is too small, $\hat{U}(\cdot;\hat{\bm\beta})$ could end up looking quite wiggly (dotted curves).  If the bandwidth is too big, on the other hand, it might induces bias between $\hat{U}(\cdot;\hat{\bm\beta})$ (dash-dot  curves) and its true pattern. The  dashed lines correspond to the results of the proposed bandwidth selector (Section \ref{sec:band}) which appear to be reasonably good. 
\begin{figure}[!h]\centering
	\includegraphics[width=.9\textwidth]{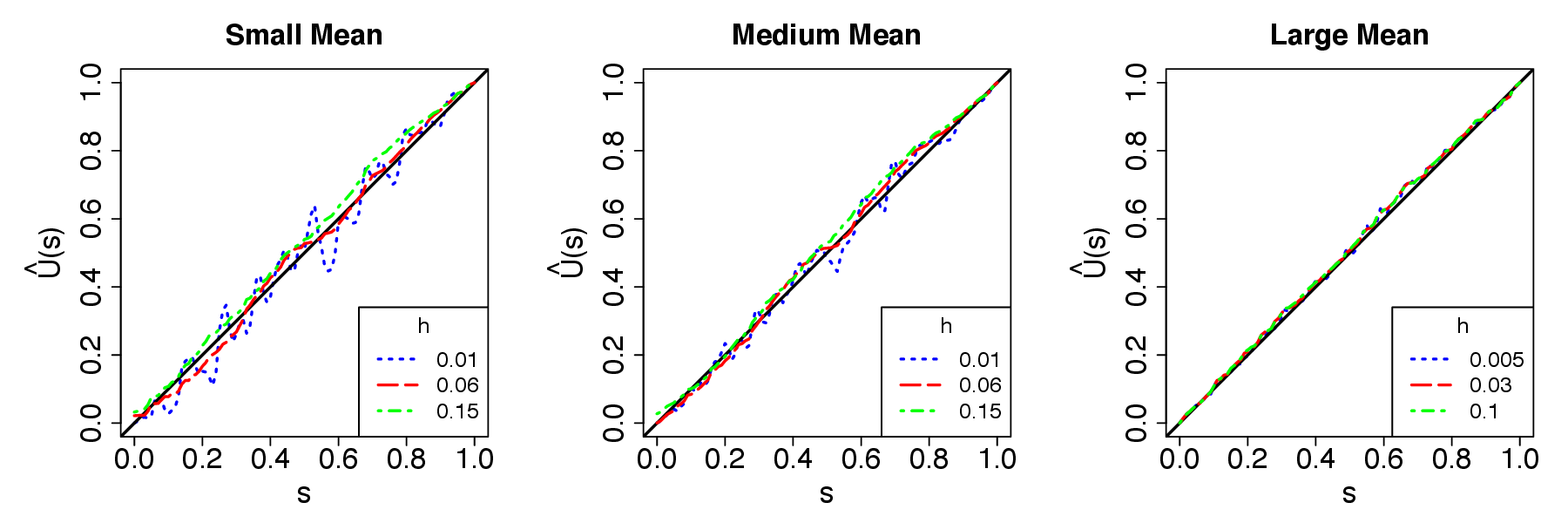} 
	\caption{Sensitivity  of $\hat{U}(\cdot;\bm{\beta})$ to the bandwidth, with undersmoothing (dotted),  oversmoothing (dash-dot) and selected bandwidths (dashed). Sample size 500.\label{fig:poisband}}
\end{figure}

Table~\ref{tab:covariates} summarizes the covariates in the LGPIF data.  Table~\ref{table:coef} contains the fitted coefficients of the selected models.
\begin{table}
	\centering
	\caption{Description and summary statistics of covariates in the LGPIF data.\label{tab:covariates}}
	\begin{tabular}{llc}
		\\[-1.8ex]\hline 
		\hline \\[-1.8ex] 
		Variable      & Description      & Mean (s.d.)\\
		\midrule
		TypeCity &   =1 if entity type is city    & 0.140 \\
		TypeCounty &  =1 if entity type is county     & 0.058  \\
		TypeSchool &  =1 if entity type is school     & 0.283 \\
		TypeTown &  =1 if entity type is town     & 0.173  \\
		TypeVillage &   =1 if entity type is village    & 0.237  \\
		TypeMisc &   =1 if entity type is other   &0.108\\
		NoClaimCredit&   =1 if  no building and content claims\\& in prior year  &0.329\\
		lnCoverage & Coverage of BC line in logarithmic &   2.119\\&millions of dollars  & (2.000)      \\
		lnDeduct&BC deductible level  in logarithmic & 7.155\\&millions of dollars  &(1.174)\\
		\bottomrule
	\end{tabular}%
	
\end{table}%

\begin{table}
	\centering
	
	\caption{Estimated coefficients and dispersion parameter of the selected regression models (negative binomial and zero-one-inflated Poisson) for the LGPIF data.\label{table:coef}}
	\begin{tabular}{@{\extracolsep{5pt}}llrrrr}
		\toprule
		&&\multicolumn{2}{c}{0-1 inflated Poisson}&\multicolumn{2}{c}{Negative Binomial}\\
		\midrule
		&&Coef.&s.e.&Coef.&s.e.\\
		\midrule
		Count&(Intercept) & -1.540 & 0.125 & -0.945 & $0.161$ \\
		
		&lnCoverage & 0.751 & 0.023 & $0.923$ & $0.030$ \\
		
		&lnDeduct & -0.020 & 0.017& -0.247 & $0.025$   \\
		
		&NoClaimCredit & -0.395 & 0.131 & -0.478 & $0.069$ \\
		
		&TypeCity & -0.143 & 0.079  & -0.198 & $0.090$ \\
		
		&TypeCounty & -0.250 & 0.087 & -0.203 & $0.113$  \\
		
		&TypeMisc & -0.195 & 0.179  & -0.598 & $0.139$\\
		
		&TypeSchool & -1.157 & 0.085 & -0.985 & $0.088$  \\
		
		&TypeTown & 0.186 & 0.175&$0.144$ & $0.137$   \\
		\midrule
		Zero&(Intercept) & -4.755 & 0.448 \\
		
		&lnCoverage & -0.580 & 0.078  \\
		
		&lnDeduct & 0.879 & 0.062   \\
		
		&NoClaimCredit & 0.536 & 0.280  \\
		
		\midrule
		One&(Intercept) & -5.533 & 0.639   \\
		
		&lnCoverage & -0.047 & 0.094   \\
		
		&lnDeduct & 0.577 & 0.084   \\
		
		&NoClaimCredit & 0.300 & 0.353   \\
		\midrule
		Dispersion&&&&0.703\\
		\bottomrule
	\end{tabular}%

\end{table}%
 \section{
	Additional Theoretical Results}\label{sec:assume}
		To formalize the discontinuity pattern of $H( s;\mathbf{X})$,
		denote $M_s^k$ as the jump point of $H( s;\mathbf{X})$  transiting from $F(k-1\mid \mathbf{X})$ to $F\left(k\mid \mathbf{X}\right)$, 
		then
		\begin{align*}
			H( s;\mathbf{X})=F\left(k\mid \mathbf{X}\right)\text{ when } M_s^k\leq \mu<M_s^{k+1}.
		\end{align*}
		When $M_s^0=-\infty<\mu<M_s^1$, for example, $F\left(0\mid \mathbf{X}\right)$ is closest to $s$, and thus $H( s;\mathbf{X})=F\left(0\mid \mathbf{X}\right)$ by the definition of $H( s;\mathbf{X})$. 
		When $\mu=M_s^1$, $F(0\mid \mathbf{X})$ and $F\left(1\mid \mathbf{X}\right)$ are equidistant from $s$. While $M_s^1<\mu< M_s^2$, $F\left(1\mid \mathbf{X}\right)$ is closest to $s$ and thus $H( s;\mathbf{X})=F\left(1\mid \mathbf{X}\right)$. 

		The following Lemma \ref{finite} and Assumption \ref{monofnew}  are made to handle the non-smoothness issue for discrete outcomes with an infinite range. 
		Lemma \ref{finite} guarantees the summation on the left of \eqref{eq:densbound}  can be up to a large number $a_n$ going to $\infty$, and $f_{H( s;\mathbf{X})}(\cdot)$ can be then approximated by $\sum_{k=0}^{a_n}f_{F\left(k\mid \mathbf{X}\right)}(\cdot)$, which is smooth in the $\epsilon_n$-neighborhood of $s$. 
		
		\begin{lemma}\label{finite}
			There exists a sequence $a_n$ going to infinity such that, 
			for any $s\in[s_L,s_U]$,
			$f_{H( s;\mathbf{X})}(s+\epsilon_n)\geq \sum_{k=0}^{a_n}f_{F\left(k\mid \mathbf{X}\right)}(s+\epsilon_n)$.
		\end{lemma}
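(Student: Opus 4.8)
The plan is to pin down exactly which curves $f_{F(k|X)}$ feed into the density $f_{H(s;X)}$ at the shifted point $s+\epsilon_n$, and then to count how many of them survive. By the transformation-of-variables argument behind (\ref{equ:hdensity}), together with the monotonicity of $F(k|\mu)$ in $\mu$, the value $f_{F(k|X)}(s+\epsilon_n)$ is carried by the unique $\mu_k$ solving $F(k|\mu_k)=s+\epsilon_n$. This contribution is inherited by $f_{H(s;X)}(s+\epsilon_n)$ precisely when $H(s;X)=F(k|X)$ at $\mu=\mu_k$, i.e.\ when $\mu_k\in[M_s^k,M_s^{k+1})$ by (\ref{eq:mk}). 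Hence the density at $s+\epsilon_n$ equals the sum of the surviving contributions, and it suffices to show that every index $k\le a_n$ passes this membership test for a suitable $a_n\to\infty$, uniformly over $s\in[a_L,a_U]$.

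First I would reduce the membership condition $\mu_k\in[M_s^k,M_s^{k+1})$ to a statement about the probability mass at $k$. Since $F(k|\mu)$ is decreasing in $\mu$, the upper constraint $\mu_k<M_s^{k+1}$ is automatic: at the jump point $M_s^{k+1}$ the level $s$ is the midpoint of $F(k|M_s^{k+1})$ and $F(k+1|M_s^{k+1})$, so $F(k|M_s^{k+1})\le s<s+\epsilon_n$. The lower constraint $\mu_k\ge M_s^k$ is equivalent to $F(k|M_s^k)\ge s+\epsilon_n$. Using again that $s$ is the midpoint of $F(k-1|M_s^k)$ and $F(k|M_s^k)$, one has $F(k|M_s^k)-s=\tfrac12\,P(Y=k\mid M_s^k)$, so the lower constraint becomes the clean mass condition
\begin{equation*}
P\!\left(Y=k\mid M_s^k\right)\ge 2\epsilon_n .
\end{equation*}
Thus a curve contributes at $s+\epsilon_n$ if and only if the probability mass at its own $s$-quantile exceeds twice the bandwidth.

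With this reduction, constructing $a_n$ is essentially a compactness argument. Writing $p_k(s)=P(Y=k\mid M_s^k)$, I would show that for each fixed $k$ the infimum $\inf_{s\in[a_L,a_U]}p_k(s)$ is strictly positive (the map $s\mapsto M_s^k$ is continuous with finite values on the compact set $[a_L,a_U]\subset(0,1)$, and $p_k$ is continuous and strictly positive there), while $\inf_{s}p_k(s)\to0$ as $k\to\infty$ because the mass evaluated at the $s$-quantile vanishes as the location parameter grows; this last decay is the distributional input supplied by Assumption \ref{monofnew}. Consequently $c_K:=\min_{0\le k\le K}\inf_{s}p_k(s)$ is a positive, non-increasing sequence tending to $0$, and setting $a_n=\max\{K:c_K\ge 2\epsilon_n\}$ yields $a_n\to\infty$ as $\epsilon_n\to0$ with $p_k(s)\ge 2\epsilon_n$ for every $k\le a_n$ and every $s\in[a_L,a_U]$. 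Summing the surviving contributions then gives $f_{H(s;X)}(s+\epsilon_n)\ge\sum_{k=0}^{a_n}f_{F(k|X)}(s+\epsilon_n)$, as required, with uniformity built into the definition of $c_K$.

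The main obstacle I anticipate is the decay rate of $\inf_s p_k(s)$, which both certifies $a_n\to\infty$ and governs how fast it may grow. For the Poisson illustration of (\ref{eq:densbound}), $k$ sits near the $s$-quantile so that the mean is of order $k$ and the local central limit theorem gives $p_k(s)\asymp(2\pi k)^{-1/2}$, forcing $a_n\asymp\epsilon_n^{-2}$; the delicate part is to make such a bound hold uniformly in $s$ and to guarantee, through the regularity in Assumption \ref{monofnew}, that the mass cannot collapse so abruptly (e.g.\ faster than any polynomial) that no growing $a_n$ exists. Establishing this uniform quantile-mass lower bound, rather than the bookkeeping around the jump points $M_s^k$, is where the real work lies.
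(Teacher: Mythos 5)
Your core reduction is correct, and it follows exactly the strategy the paper indicates (the paper's own proof is relegated to the supplement, but its remark that the lemma ``can be satisfied by choosing right order of $a_n$ depending on $\epsilon_n$'' is precisely your construction): by (\ref{eq:mk}), the branch $F(k|X)$ contributes its density to $f_{H(s;X)}$ at $s+\epsilon_n$ exactly when the solution $\mu_k$ of $F(k|\mu_k)=s+\epsilon_n$ lies in $[M_s^k,M_s^{k+1})$; the upper endpoint is automatic, and the lower endpoint, via the midpoint property of the jump points, is the clean mass condition $p_k(s):=P\left(Y=k\mid M_s^k\right)\geq 2\epsilon_n$ (vacuous for $k=0$ since $M_s^0=-\infty$). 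Taking $a_n$ to be the largest $K$ with $c_K:=\min_{0\leq k\leq K}\inf_{s\in[a_L,a_U]}p_k(s)\geq 2\epsilon_n$ then yields the stated inequality uniformly in $s$. Two caveats. First, your appeal to Assumption \ref{monofnew} to justify $\inf_s p_k(s)\to 0$ is both circular and unnecessary: that assumption is phrased in terms of the $a_n$ produced by this very lemma, so it cannot be an input to its proof; fortunately, the decay is not needed at all, since $a_n\to\infty$ follows from $c_K>0$ for each finite $K$ (your compactness argument) together with $\epsilon_n\to 0$, and if $c_K$ stays bounded away from zero one simply caps $a_n$ (say at $n$) so the maximum is finite. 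Second, your closing paragraph misplaces the difficulty: no decay rate for $p_k(s)$, uniform or otherwise, is required for the lemma as stated, since existence of a slowly growing $a_n$ is all that is claimed; rates such as $p_k\asymp k^{-1/2}$ (hence $a_n\asymp\epsilon_n^{-2}$) for the Poisson case only become relevant when verifying Assumption \ref{monofnew} for specific families, which the paper treats as a separate task in the supplementary material.
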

		
	The proofs of all the  theoretical results can be found in the supplementary material.
		The following assumption constrains the tail probability of $\mu$ to ensure $\sum_{k=0}^{a_n}f_{F\left(k\mid \mathbf{X}\right)}(\cdot)$ is a good approximation to $f_{H( s;\mathbf{X})}(\cdot)$.

		\begin{assumption}\label{monofnew}
			Let $a_n$ be the sequence  in Lemma \ref{finite}, then
			$\epsilon_n^{-2}P\left(\mu>M_s^{a_n}\right)\rightarrow0$, for any $s\in[s_L,s_U]$. 
		\end{assumption}

		We make the following regularity assumption to ensure $f_{F\left(k\mid \mathbf{X}\right)}$ is sufficiently smooth.
		\begin{assumption}\label{monof} 
			For fixed $k$, $f_{F\left(k\mid \mathbf{X}\right)}$ is twice continuously differentiable, and	$g$ and its derivatives $g'$ and $g''$  are uniformly bounded. In addition, for any $k$ and $k'$, the joint density of $\left(F\left(k\mid \mathbf{X}\right),F(k'\mid \mathbf{X})\right)$ is bounded.
		\end{assumption}
		
		A necessary yet not sufficient condition for Assumption \ref{monof} is that there exists at least one continuous  covariate whose coefficient is not 0. 
		When $Y$  follows a   Poisson distribution with mean $\lambda=\exp(\mu)$, 
		Assumptions  \ref{monofnew} and \ref{monof}  are satisfied if $\mathrm{E}_{\mathbf{X}}(\lambda)$ is finite, and they hold for negative binomial distributions 
		if $\mathrm{E}_{\mathbf{X}}(\lambda^2)$ is finite; see the supplementary material 
		for verification.
		Therefore, if there are highly right-skewed covariates, the log transformation is suggested. For binary and ordinal variables, Assumption \ref{monof}  is satisfied if the density of  $\mu$ is twice continuously differentiable.
		
		Now, we make the following assumptions in order to guarantee the convergence of the quasi-empirical residual distribution function when the estimated coefficients are plugged in. Denote $H(s; \mathbf{x}, \bm{\theta})$ as the closet interior grid point to $s$ when the parameters are set to be $\bm\theta$.

		\begin{assumption}[Lipschitz condition]\label{lips}
			There exists a constant $\alpha_2$  such that, for all for bounded $\bm\theta$ and $\bm{\theta}'$, when $\mid \bm\theta- \bm{\theta}'\mid $ is small enough, for any $s\in V$,
			$$\mid \mid H(s; \mathbf{x}, \bm\theta) -s\mid - \mid H(s; \mathbf{x},\bm{\theta}')-s\mid \mid \leq  \alpha_2\mid \bm\theta - \bm{\theta}'\mid. $$
		\end{assumption}
		This assumption is satisfied when $Y$ follows Poisson GLMs with a log link and bounded covariates.
		The following assumption guarantees the model estimation is well taken care of.
		\begin{assumption}\label{bigop}
			$\sqrt{n}(\hat{ {\bm\beta}}-\bm{\beta})=O_p(1)$.
		\end{assumption}
		\end{appendices}
	{\centering\section*{SUPPLEMENTARY MATERIAL}}
\begin{description}
	
	\item[Supplementary material:] The supplementary material includes additional simulation results and proofs of the theoretical results in Sections  \ref{asym} and Appendix \ref{sec:assume}. (.pdf file)
	
	\item[{\tt R} Code:] {\tt R}  implementation of the proposed method and several working examples to demonstrate its use. {\tt R}  code for downloading and analyzing the data.
	
\end{description}

\singlespacing
\bibliographystyle{econ}
\bibliography{residualbib}

\end{document}